\def\qed{\hspace*{\fill} $\Box$\par\medskip}
\newtheorem{theorem}{Theorem}
\newtheorem{lemma}[theorem]{Lemma}
\newenvironment{proof} {\noindent{\it Proof. }} {{\qed}}
\newenvironment{proofof}[1]{\noindent{\it Proof of #1. }} {{\qed}}
\newtheorem{proposition}[theorem]{Proposition}
\newcommand{\OR}{\mbox{\sc Or}}
\newcommand\Pm{\mathcal{P}}
\newcommand\Pt{{\widetilde{\mathcal{P}}}}
\newcommand\Q{\mathcal{Q}}
\newcommand\Rm{\mathcal{R}}
\newcommand\C{\mathcal{C}}
\newcommand\LL{\mathrm{L}}
\newcommand\RR{\mathrm{R}}
\newcommand\G{{\widetilde{G}}}
\newcommand\wrow{{\mbox{\sl row}}}
\newcommand\wright{{\mbox{\sl right}}}
\newcommand\wleft{{\mbox{\sl left}}}
\newcommand\class{{\mbox{\sl class}}}
\newcommand\wset{{\mbox{\sl set}}}
\newcommand\wtrim{{\mbox{\sl trim}}}
\newcommand\out{{\mathrm{out}}}
\newcommand\rest{\restriction}
\title{Lower bounds for Combinatorial Algorithms for Boolean Matrix Multiplication\footnote{The research leading to these results has received funding from the European Research Council under the European Union's Seventh Framework Programme (FP/2007-2013)/ERC Grant Agreement no. 616787. The second author was also partially supported by the Center of Excellence CE-ITI under the grant P202/12/G061 of GA \v{C}R. The third author was supported in part by the Simons Foundation under Award 332622.}}
\author{Debarati Das\\
Computer Science Institute of Charles University\\
Malostransk{\'e}  n{\'a}mesti 25, 11800 Praha 1, Czech Republic\\
\texttt{debaratix710@gmail.com}
\and Michal Kouck\'y\\
Computer Science Institute of Charles University\\
Malostransk{\'e}  n{\'a}mesti 25, 11800 Praha 1, Czech Republic\\
\texttt{koucky@iuuk.mff.cuni.cz}
\and Michael Saks\\
Department of Mathematics\\
Rutgers University, Piscataway, NJ, USA\\
\texttt{msaks30@gmail.com}
}
\begin{document}

\maketitle

\begin{abstract}
In this paper we propose models of combinatorial algorithms for the Boolean
Matrix Multiplication (BMM), and prove lower bounds on computing BMM in these models. 
First, we give a relatively relaxed combinatorial model which is an extension of the model by Angluin (1976), 
and we prove that the time required by any algorithm
for the BMM is at least $\Omega(n^3 / 2^{O( \sqrt{ \log n })})$. Subsequently, we propose a more general model capable of simulating the
"Four Russians Algorithm". We prove a lower bound of $\Omega(n^{7/3} / 2^{O(\sqrt{ \log n })})$ for the BMM under this model. 
We use a special class of graphs, called $(r,t)$-graphs, originally discovered by Rusza and Szemeredi (1978),
along with randomization, to construct matrices that are hard instances for our combinatorial models.
\end{abstract}

% \newpage

\section{Introduction}

Boolean matrix multiplication (BMM) is one of the core problems in discrete algorithms, with numerous applications including
 triangle detection in graphs ~\cite{Itai77}, context-free grammar parsing ~\cite{Valiant75}, and transitive closure etc. ~\cite{FischerM71, Furman70, Munro71}.
Boolean matrix multiplication can be naturally interpreted as a path problem in graphs.  Given a layered graph
with three layers $A,B,C$ and edges between layers $A$ and $B$ and between $B$ and $C$,   compute the bipartite graph between $A$ and $C$ in which $a \in A$ and $c \in C$ are joined if and only if they have a common neighbor.
If we identify the bipartite graph between $A$ and $B$ with its $A \times B$ boolean adjacency matrix $\Pm$
and the graph between $B$ and $C$  with its  $B \times C$ boolean adjacency matrix $\Q$ then 
 the desired graph between $A$ and $C$ is just the boolean product $\Pm \times \Q$.

Boolean matrix multiplication is the combinatorial counterpart of integer matrix multiplication.  Both involve the computation of $n^2$ output values, each of which
can be computed in a straightforward way in time $O(n)$ yielding  a  $O(n^3)$ algorithm for both problems.  One of the celebrated classical results in algorithms
is Strassen's discovery ~\cite{Strassen69} that by ordinary matrix multiplication has {\em truly subcubic} algorithms, i.e. algorithms that run in time $O(n^{\omega})$ for some $\omega<3$, which
compute the $n^2$ entries by computing and combining carefully chosen (and highly non-obvious) polynomial functions of the matrix entries.
Subsequent improvements ~\cite{CoppersmithW90, Williams12, Gall14a} have reduced the value of $\omega$.

One of the fascinating aspects of BMM  is that, despite its intrinsic combinatorial nature, the asymptotically fastest algorithm known is obtained by treating the boolean entries as integers and applying  fast integer matrix multiplication.  The intermediate calculations done for this algorithm seemingly have little to do with the combinatorial structure of the underlying bipartite graphs.  There has been considerable interest in developing "combinatorial" algorithms for BMM, that is algorithms where the intermediate computations all have a natural combinatorial interpretation in terms of the original problem.  Such interest is motivated both by intellectual curiosity, and by the fact that the fast integer multiplication algorithms are impractical because the constant factor hidden in $O(\cdot)$ is so large.

The straightforward $n^3$ algorithm has a straightforward combinatorial interpretation: for each pair of vertices $a,c$ check each vertex of $B$ to see whether it is adjacent to both $a$ and $c$.    The so-called Four Russians Algorithm by Arlazarov, Dinic, Kronrod, Faradzhev ~\cite{FourRussian70} solves BMM in $O(n^3/\log^2 (n))$ operations, and was the first combinatorial algorithm for BMM with complexity
$o(n^3)$.  Overt the past 10 years, there have been a sequence of  combinatorial algorithms \cite{BansalW12, Chan15, Yu15}  developed  for BMM, all having complexities of the form $O(n^3/(\log n)^c)$ for increasingly large constants $c$.
The best and most recent of these, due to Yu ~\cite{Yu15} has complexity $\hat{O}(n^3/\log^4 n)$ (where the $\hat{O}$ notation suppresses $poly(\log\log(n))$ factors.   (It should be noted that the algorithm presented in each of these recent papers is  for the problem of determining whether a given graph has a triangle; it was shown in \cite{WilliamsW10} that a (combinatorial) algorithm for triangle finding with complexity $O(n^3/\log^cn)$ can be used as a subroutine to give a (combinatorial) algorithm for BMM with a similar complexity.)

%OMMENT FROM MIKE: The previous version talked about the word ram model  of size w i- log(n) n which there is a n/w log(n)  complexity for 4 russians.  But I think we want to talk about bit complexity since word-ram complexity is not
%"combinatorial".   Do you agree?

While each of these combinatorial  algorithms uses interesting and non-trivial ideas, each one saves only a polylogarithmic factor as compared to the straightforward algorithm, in contrast with the algebraic algorithms which save a power of $n$. The motivating question for the investigations in this paper is: Is there a truly subcubic combinatorial algorithm for BMM?  We suspect that the answer is no.

In order to consider this question precisely, one needs to first make precise the notion of a combinatorial algorithm.  This itself is challenging.  To formalize the notion of a combinatorial algorithm requires some computation model which specifies what the algorithm states are, what operations can be performed, and what the cost of those operations is.
If one examines each of these algorithms one sees that the common feature is that the intermediate information stored by the algorithm is of one of the following three types (1): for some pair of subsets $(X,Y)$ with $X \subseteq A$ and $Y \subseteq B$,
the submatrix (bipartite subgraph) induced by $\Pm$ on $X \times Y$ has some specified  monotone property (such as, every vertex in $X$ has a neighbor in $Y$),  (2)  for some pair of subsets $(Y,Z)$ with $Y \subseteq B$ and $Z \subseteq C$, the bipartite subgraph induced by $\Q$ on $Y \times Z$ has some specific monotone property, or (3) 
for some pair of subsets $(X,Z)$ with $X \subseteq A$ and $Z \subseteq C$, the bipartite subgraph induced by $\Pm \times \Q$ on $X \times Z$ has some specific monotone property.

If one accepts the above characterization of the possible information stored by the algorithm, we are still left with the problem of specifying the elementary steps
that the algorithm is permitted to make to generate new pieces of information, and what the computational cost is.  The goal in doing this is that the allowed operations and cost function
should be such that they accurately reflect the cost of operations in an algorithm.  In particular, we would like that our model is powerful enough to be able to simulate all of the known
combinatorial algorithms with running time no larger than their actual running time, but not so powerful that it allows for fast (e.g. quadratic time) algorithms that are not implementable on a real computer.  We still don't have a satisfactory model with these properties.

This paper takes a step in this direction.  We develop a model which captures some of what a combinatorial algorithm might do.  In particular our model is capable of efficiently simulating the Four Russians algorithm, but is sufficiently more general.  We then prove a superquadratic  lower bound in the model: Any algorithm for BMM in this model requires time at least
$\Omega(n^{7/3} / 2^{O(\sqrt{ \log n })})$.

Unfortunately, our model is not strong enough to simulate the more recent combinatorial approaches.  Our hope is that our approach provides a starting point for a more comprehensive analysis of the limitation of combinatorial algorithms.

One of the key features of our lower bound is the identification of a family of "hard instances" for BMM.  
In particular, we use tripartite graphs on roughly $3n$ vertices that have almost quadratic number a pairs of vertices from
the first and the last layers connected by a single ({\em unique}) path via the middle layer. These graphs are derived from
{\em $(r,t)$-graphs} of Rusza and Szemeredi \cite{RS78}, which are dense bipartite graphs on $2n$ vertices that can be decomposed
into linear number of disjoint induced matchings.
More recently, Alon, Moitra Sudakov \cite{AMS12} provides strengthening of Rusza and Szemeredi's construction although they lose in the parameters that
are most relevant for us.

\subsection{Combinatorial models}

The first combinatorial model for BMM was given by Angluin \cite{Angluin76}. For the product of $\Pm \times \Q$, the model allows
to take bit-wise OR ({\em union}) of rows of the matrix $\Q$ to compute the individual rows of the resulting matrix $\Pm \Q$. The cost in this model
is the number of unions taken. By a counting argument, Angluin \cite{Angluin76} shows that there are matrices $\Pm$ and $\Q$ such that 
the number of unions taken must be $\Omega(n^2/\log n)$. This matches the number of unions taken by the Four Russians Algorithm, and in that sense the Four Russians Algorithm  is optimal.

If the cost of taking each row union were counted as $n$, the total cost would become $\Theta(n^3/\log n)$. The Four Russians Algorithm  improves this time to $O(n^3/\log^2 n)$
by leveraging ``word-level parallelism''   to compute each row union in time $O(n/\log n)$. 
% (This is achievable also in some other models than RAM.)

A possible approach to speed-up the Four Russians Algorithm  would be to lower the cost of each union operation even further. The above analysis ignores the fact
that we might be taking the union of rows with identical content multiple times. For example if $\Pm$ and $\Q$ are  random matrices (as in the lower
bound of Angluin) then each row of the resulting product is an all-one row. Such rows will appear after taking an union of merely $O(\log n)$ rows
from $\Q$. An entirely naive algorithm would be taking unions of an all-one row with $n$ possible rows of $\Q$ after only few unions. Hence, there would
be only $O(n \log n)$ different unions to take for the total cost of $O(n^2 \cdot poly(\log n))$.
We could quickly detect repetitions of unions by maintaining a short fingerprint for each row evaluated. 

Our first model takes repetitions into account. Similarly to Angluin, we focus on the number of unions taken by the algorithm
but we charge for each union differently. The natural cost of  a union of rows with values $u,v\in\{0,1\}^n$ counts
 the cost as the minimum of the number of ones in $u$ and $v$. This is the cost we count as one could use sparse set representation for $u$ and $v$.
In addition to that if unions of the same rows (vectors) are taken multiple times we charge all of them only ones, resp. we charge the first one
the proper cost and all the additional unions are for a unit cost. As we have argued, on random matrices $\Pm$ and $\Q$, BMM will cost
$O(n^2 \log n)$ in this model. Our first lower bound shows that even in this model, there are matrices for which the cost of BMM is almost cubic.

\begin{theorem}[Informal statement]
In the row-union model with removed repetitions the cost of Boolean matrix multiplication is $\Omega(n^3 / 2^{O(\sqrt{ \log n })})$.
\end{theorem}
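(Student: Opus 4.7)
The plan is to construct hard input matrices from Ruzsa--Szemer\'edi $(r,t)$-graphs combined with a random relabeling, and to lower bound the cost via a loss-based argument on the underlying union DAG. Let $H$ be an RS $(r,t)$-graph on vertex classes $L,R$ of size $n$ whose edge set decomposes into $t=\Theta(n)$ induced matchings $M_1,\ldots,M_t$ of size $r = n / 2^{O(\sqrt{\log n})}$ each. Form $\Pm\in\{0,1\}^{n\times t}$ by $\Pm_{a,i}=\mathbf{1}[a\in V_L(M_i)]$ and $\Q\in\{0,1\}^{t\times n}$ by $\Q_{i,c}=\mathbf{1}[c\in V_R(M_i)]$, so each row of $\Q$ has weight exactly $r$. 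The induced-matching property forces each of the $rt = n^2/2^{O(\sqrt{\log n})}$ edges $(a,c)\in M_i$ of $H$ to be a \emph{uniquely witnessed} one-entry of $\Pm\Q$: the only $j$ with $\Pm_{a,j}\Q_{j,c}=1$ is $j=i$. I would further randomize the construction (for example by applying independent uniform permutations to the rows of $\Pm$ and columns of $\Q$, and by picking the assignment of matchings to the middle layer at random) to defeat any fixed algorithm.

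I would model any algorithm as an OR-DAG whose leaves are the rows of $\Q$ (each of weight $r$) and whose $n$ sinks are the rows of $\Pm\Q$; an internal node $v=u\cup w$ is charged $\min(|u|,|w|)$ the first time the pair $(u,w)$ is combined. The pointwise inequality $\min(|u|,|w|)\ge|u\cap w|$ shows that the total cost is at least the total \emph{loss} $\sum_v|u_v\cap w_v|$. For a binary tree that produces a single row $a$ from its $d_\Pm(a)$ leaves, the telescoping identity
\[
\sum_{\ell}W(\ell)\;=\;W(\mathrm{root})+\sum_{v}|u_v\cap w_v|
\]
gives a per-row loss of at least $d_\Pm(a)\cdot r - W(\mathrm{row}\;a)$. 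Summing over $a$, and using $\sum_a d_\Pm(a)=rt$, the trivial bound $W(\Pm\Q)\le n^2$, and the fact that $r^2 t$ is super-quadratic in $n$, one obtains a total tree-unfolded loss of at least $r^2 t - n^2 = \Omega(r^2 t) = \Omega(n^3/2^{O(\sqrt{\log n})})$.

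The principal obstacle is to turn this tree bound into a genuine DAG bound: a DAG node shared among the sub-computations of many output rows is charged only once in the true cost, yet appears with positive weight in each participating row's unfolded tree. The plan here is to exploit the randomization: it should make the family of neighborhoods $\{N_\Pm(a)\}_a$ ``random enough'' that, with high probability, any fixed subset $S\subseteq[t]$ is contained in $N_\Pm(a)$ for only few $a$, controlled by a Chernoff / second-moment estimate using the parameters of $H$. Consequently, any precomputed intermediate union serves only a small fraction of the rows, so the multiplicity with which any single DAG node participates in row unfoldings is bounded by $2^{O(\sqrt{\log n})}$. This slack is exactly what the target bound provides: the true DAG loss falls short of the tree-unfolded lower bound by at most a $2^{O(\sqrt{\log n})}$ factor, leaving a cost of $\Omega(n^3/2^{O(\sqrt{\log n})})$. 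A standard averaging (Yao minimax) argument then converts the expected lower bound into a deterministic worst-case instance.
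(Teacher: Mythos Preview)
Your plan has two real gaps.

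\textbf{The randomization is inert.} Permuting rows of $\Pm$, columns of $\Q$, or relabeling the middle layer produces a graph isomorphic to the original. In this model the witness is non-uniform (it may depend on the input), so a relabeling cannot ``defeat'' it: whatever structural coincidences the RS graph has are still there after permutation. The paper instead deletes each $A$--$B$ edge independently with probability $1/2$; this makes each $N_{\Pt}(a)$ an independent random subset of $N_{\Pm}(a)$ and is what makes the Chernoff-type containment bound go through for \emph{every} set $S$ simultaneously (Lemma~\ref{l-diversity}).

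\textbf{Diversity is not enough; you need unhelpfulness.} The cost is charged once per \emph{row-class}, i.e.\ per unordered triple of vectors $\{u,v,u\vee v\}$, not per gate and not per index set. Your sharing bound (``a fixed $S\subseteq[t]$ lies in $N_\Pm(a)$ for few $a$'') controls how many output rows can reuse a fixed \emph{gate}, but not how many gates can share a row-class. Concretely, for each $a$ there could be a gate $g_a$ with $\wset(g_a)=S_a\subseteq N_{\Pt}(a)$, the sets $S_a$ pairwise very different (so diversity is satisfied), yet all $g_a$ having the same vector $\wrow(\Q_{S_a})$ and hence the same row-class. Your tree-unfolded loss would then overcount this single row-class by a factor of $n$, and the claimed $2^{O(\sqrt{\log n})}$ multiplicity bound fails. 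The paper isolates exactly this issue: it defines $S$ to be \emph{helpful} for $a$ when some $S'\subseteq\beta'_{a,\G}(C)$ of size $\ge|S|$ produces the same vector on the unique-for-$a$ coordinates, proves that with the edge-deletion randomization any $S$ of size $\ge d\log n$ is helpful for at most $c\log n$ vertices $a$ (Lemma~\ref{l-unhelpful}), and then counts distinct row-classes directly (Lemma~\ref{l-simple}) rather than via a loss telescoping. Your loss inequality $\min(|u|,|w|)\ge|u\cap w|$ is correct but does nothing to bridge the index-set/vector gap; to make a DAG-to-tree reduction work you would still need the unhelpfulness property, at which point the paper's direct counting of row-classes of cost $\ge r$ is simpler and tighter.
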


The next natural operation one might allow to the algorithm is to divide rows into pieces. This is indeed what the Four Russians Algorithm  and many other algorithms
do. In the Four Russians Algorithm, this corresponds to the ``word-level parallelism''. Hence we might allow the algorithm to break rows into pieces, take
unions of the pieces, and concatenate the pieces back. In our more general model we set the cost of the partition and concatenation to be a unit cost,
and we only allow to split a piece into continuous parts. More complex partitions can be simulated by performing many two-sided partitions
and paying proportionally to the complexity of the partition. The cost of a union operation is again proportional to the smaller number of ones in the 
pieces, while repeated unions are charged for a unit cost. In this model one can implement the Four Russians Algorithm  for the cost $O(n^3/\log^2 n)$, matching its usual
cost. In the model without partitions the cost of the Four Russians Algorithm  is $\Theta(n^3/\log n)$.

In this model we are able to prove super-quadratic lower bound when we restrict that all partitions happen first, then unions take place, and then
concatenations. 

\begin{theorem}[Informal statement]
In the row-union model with partitioning and removed repetitions the cost of Boolean matrix multiplication is 
$\Omega(n^{7/3} / 2^{O(\sqrt{ \log n })})$.
\end{theorem}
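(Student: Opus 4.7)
The plan is to reuse the $(r,t)$-graph-based hard instance from the proof of the first theorem: Boolean matrices $\Pm, \Q$ whose induced tripartite graph has $N = \Omega(n^2 / 2^{O(\sqrt{\log n})})$ \emph{unique pairs} $(a,c) \in A \times C$, each with a single witness $b \in B$ realizing the only path $a \to b \to c$. I will additionally randomize the construction (for instance, by random row/column permutations) so that, with overwhelming probability, for every unique pair $(a,c)$ with witness $b$, no row $\Q_{b'}$ with $b'\ne b$ has the property that $\Q_{b'}$ restricted to any column interval containing $c$ is dominated by the output row $(\Pm\Q)_a$. This blocks the algorithm from substituting another row for $\Q_b$ in the union that must supply the bit at $(a,c)$.

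The central structural observation is an \emph{alignment lemma} specific to the restricted partition-union-concatenation model: partitions are fixed before any union and union operations can only combine pieces at the same column interval, so for every unique pair $(a,c)$ with witness $b$ the block of $\pi_b$ containing $c$ must coincide with the block of $P_a$ containing $c$. Indeed, the bit at column $c$ of output row $a$ must come from the piece of $\Q_b$ at the appropriate interval, and since no realignment is possible once partitions are fixed, the two partitions must agree at $c$. Write $I(a,c)$ for this common interval and $s(a,c)$ for its size.

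Next, introduce a threshold $s$ and call a unique pair \emph{small} if $s(a,c)\le s$ and \emph{large} otherwise; let $N_s, N_\ell$ be the counts. A double-counting argument gives $N_s \le s \cdot \sum_b |\pi_b|$, so the partition cost is at least $N_s/s$; the same bound applies to the concatenation cost via $\sum_a |P_a|$. For the union cost, I would analyze each large interval $I$ separately: the sub-computation at $I$ is an instance of BMM in the unrestricted union model of the first theorem, with inputs $\{\Q_b[I] : I\in\pi_b\}$ and outputs $\{u_{a,I} : I\in P_a\}$, and because pieces at different intervals occupy disjoint bit positions the repetition discount cannot transfer across intervals. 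Adapting the first-theorem argument to this sub-instance—which inherits the $(r,t)$-structure of the RS construction restricted to the columns of $I$—yields a per-interval union cost growing with $|I|$ times the number of distinct output pieces forced at that interval.

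The hardest step will be making this per-interval bound strong enough that the total union contribution is $\Omega(s \cdot N_\ell / 2^{O(\sqrt{\log n})})$: a naive application of the first-theorem bound recovers only the global $n^3/2^{O(\sqrt{\log n})}$ estimate and does not exploit the partition. The refined argument must show that the dense output pieces $u_{a,I}$ on the RS sub-instance force $\Omega(|I|)$ cost for each distinct output vector at interval $I$, while the repetition discount cannot cheaply amortize these across the many large intervals because of the bit-position disjointness. Once this is in hand, balancing $N_s/s$ against $s \cdot N_\ell / 2^{O(\sqrt{\log n})}$ by choosing the threshold around $s = n^{1/3}$, together with $N_s + N_\ell = N = \Omega(n^2 / 2^{O(\sqrt{\log n})})$, makes both terms match at $\Omega(n^{7/3}/2^{O(\sqrt{\log n})})$ and completes the proof.
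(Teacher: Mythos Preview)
Your high-level outline (hard instance from the Rusza--Szemer\'edi construction, a case split at threshold $\approx n^{1/3}$, balancing partition/concatenation cost against union cost) is broadly aligned with the paper's structure, but the proposal has a genuine gap at the point you yourself flag as ``the hardest step.''

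The claim that ``the repetition discount cannot transfer across intervals because pieces at different intervals occupy disjoint bit positions'' is false in the paper's cost model. The row-class of a union gate is the unordered triple $\{u,v,z\}$ of bit vectors; it does \emph{not} record the interval $K$. Two union gates operating on different intervals $K$ and $K'$ of the same length can produce identical bit vectors and hence share a row-class, so the cost is charged only once. This cross-interval reuse is exactly the obstacle the paper isolates, and it is handled by the \emph{limited reuse} property (Lemma~\ref{l-reuse}): with high probability over the random edge deletion in $\G$, for any resultant circuit $W'$ the total number of chargeable gates, across all pairwise independent $(a,K)$ pairs, that are compatible with some gate of $W'$ is at most $O(|W'|\log n)$. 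Proving this requires a careful probabilistic argument (Lemma~\ref{l-randomization}) that is quite different from the unhelpfulness argument used within a single interval. Your proposal contains no substitute for this lemma, and without it the per-interval bounds cannot be summed.

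Two smaller issues. First, the ``alignment lemma'' presumes each row $b$ carries a single partition $\pi_b$; in the model a row may be partitioned many times into overlapping intervals, so $\pi_b$ is not well-defined and the bound $N_s \le s\cdot\sum_b|\pi_b|$ needs reformulation (the version via $\sum_a|P_a|$ and concatenation gates is closer to correct, and is essentially the paper's Lemma~\ref{l-manyparts}, which uses diversity rather than a raw count). Second, the randomization you propose (row/column permutations) is not what the paper uses; the paper randomly deletes edges between $A$ and $B$, and this specific randomization is what makes the unhelpfulness and limited-reuse arguments go through.
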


Perhaps, the characteristic property of ``combinatorial'' algorithms is that from the run of such an algorithm one can extract a combinatorial proof 
({\em witness}) for the resulting product. This is how we interpret our models. For given $\Pm$ and $\Q$ we construct a witness circuit that mimics
the work of the algorithm. The circuit operates on rows of $\Q$ to derive the rows of the resulting matrix $\Pm\Q$. The values flowing through the circuit
are bit-vectors representing the values of rows together with information on which union of which submatrix of $\Q$ the row represents.
The gates can partition the vectors in pieces, concatenate them and take their union. For our lower bound we require that unions take place
only after all partitions and before all concatenations. This seems to be a reasonable restriction since we do not have to emulate the run of an algorithm
step by step but rather see what it eventually produces.
Also allowing to mix partitions, unions and concatenations in arbitrary order could perhaps lead to only quadratic cost on all matrices. We are not able to argue otherwise.

The proper modelling of combinatorial
algorithms is a significant issue here: one wants a model that is strong
enough to capture known algorithms (and other conceivable algorithms) but
not so strong that it admits unrealistic quadratic algorithms. We do not
know how to do this yet, and the present paper is intended as a first
step.

\subsection{Our techniques}

Central to our lower bounds are graphs derived from $(r,t)$-graphs of Rusza and Szemeredi \cite{RS78}.
Our graphs are tripartite with vertices split into parts $A, B, C$, where $|A|=|C|=n$ and $|B|=n/3$.
The key property of these graphs is that there are almost quadratically many pairs $(a,c)\in A \times C$ 
that are connected via a single ({\em unique}) vertex from $B$. In terms of the corresponding matrices $\Pm$ and $\Q$ this
means that in order to evaluate a particular row of their product we must take a union of very specific rows in $\Q$.
The number of rows in the union must be almost linear. Since $\Q$ is dense this might lead to an almost cubic cost for the 
whole algorithm provided different vertices in $A$ are connected to different vertices in $B$ so we take different unions.

This is not apriori the case for the $(r,t)$-derived graph but we can easily achieve it by removing edges between $A$ and $B$ at random,
each independently with probability 1/2. The neighborhoods of different vertices in $A$ will be very different then.
We call such a graph {\em diverse} (see a later section for a precise definition). It turns out that for our
lower bound we need a slightly stronger property, not only that we take unions of different rows of $\Q$ but also that the results
of these unions are different. We call this stronger property {\em unhelpfulness}.

Using unhelpfulness of graphs we are able to derive the almost cubic lower bound on the simpler model.  Unhelpfulness is a much more
subtle property than diversity, and we crucially depend on the properties of our graphs to derive it.

Next we tackle the issue of lower bounds for the partition model. This turns out to be a substantially harder problem,
and most of the proof is in the appendix. One needs unhelpfulness on different pieces of rows 
(restrictions to columns of $\Q$), that is making sure that the result of union of some pieces does not appear (too often)
as a result of union of another pieces. This is impossible to achieve in full generality. Roughly speaking what we can achieve is that different
parts of {\em any} witness circuit cannot produce the same results of unions.

The key lemma that formalizes it (Lemma \ref{l-reuse}) shows that the results of unions obtained for a particular interval of columns in $\Q$
can be used at most $O(\log n)$ times on average in the rest of the circuit. This is a property of the graph which we refer to as that the graph
{\em admitting only limited reuse}. This key lemma is technically complicated and challenging to prove (albeit elementary).
Putting all the pieces together turns out to be also quite technical.

\section{Notation and preliminaries}

For any integer $k\ge 1$, $[k]=\{1,\dots,k\}$. For a vertex $a$ in a graph $G$ and a subset $S$ of vertices of $G$, 
$\Gamma(a)$ are the neighbors of $a$ in $G$, and $\Gamma_S(a)=\Gamma(a) \cap S$. (To emphasize which graph $G$ we mean we may write $\Gamma_{S,G}(a)$.)
A {\em subinterval} of $C=\{c_1,c_2,\dots,c_n\}$ is any set $K=\{c_i,c_{i+1},\dots,c_j\}$, for some $1 \le i \le j \le |C|$. 
By $\min K$ we understand $i$ and by $\max K$ we mean $j$.
For a subinterval $K=\{c_i,c_{i+1},\dots,c_{i+\ell-1}\}$ of $C$ and  a vector $v\in\{0,1\}^\ell$, $K\rest_v$ denotes the set  
$\{c_j\in K;\; v_{j-i+1}=1\}$. For a vector $v\in\{0,1\}^n$, $v\rest_K = v_i,v_{i+1},\dots,v_{i+\ell-1}$.
For a binary vector $v$, $|v|$ denotes the number of ones in $v$.

\subsection{Matrices}
We will denote matrices by calligraphic letters $\Pm,\Q,\Rm$. All matrices we consider are binary matrices.
For integers $i,j$, $\Pm_i$ is the $i$-th row of $\Pm$ and $\Pm_{i,j}$ is the $(i,j)$-th entry of $\Pm$. 
Let $\Pm$ be an $n_A \times n_B$ matrix and $\Q$ be an $n_B \times n_C$ matrix, for some integers $n_A,n_B,n_C$.
We associate matrices $\Pm,\Q$  with a tripartite graph $G$. The vertices of $G$ is the set $A \cup B \cup C$
where $A=\{a_1,\dots,a_{n_A}\}$, $B=\{b_1,\dots,b_{n_B}\}$ and $C=\{c_1,\dots,c_{n_C}\}$. The edges of $G$ 
are $(a_i,b_k)$ for each $i,k$ such that $\Pm_{i,k}=1$, and $(b_k,c_j)$ for each $k,j$ such that $\Q_{k,j}=1$.
In this paper we only consider graphs of this form.
Sometimes we may abuse notation and index matrix $\Pm$ by vertices of $A$ and $B$, and similarly $\Q$ by vertices from $B$ and $C$.
For a set of indices $S \subseteq B$, $\wrow(\Q_S) = \bigvee_{i\in S} \Q_i$ is the bit-wise $\OR$ of rows of $\Q$ given by $S$.

\subsection{Model}

\smallskip
\noindent{\bf Circuit.}
A {\em circuit} is a directed acyclic graph $W$ where each node ({\em gate})
has in-degree either zero, one or two. 
The {\em degree} of a gate is its in-degree, the {\em fan-out} is its out-degree. Degree one gates are called {\em unary}
and degree two gates are {\em binary}.
Degree zero gates are called {\em input gates}.
For each binary gate $g$, $\wleft(g)$ and $\wright(g)$ are its two predecessor gates. 
A computation of a circuit proceeds by passing values along edges, where each gate processes
its incoming values to decide on the value passed along the outgoing edges. The input gates
have some predetermined values. The output of the circuit is the output value of some designated
vertex or vertices.

\smallskip
\noindent{\bf Witness.}
Let $\Pm$ and $\Q$ be matrices of dimension $n_A \times n_B$ and $n_B \times n_C$, resp., with its associated graph $G$.
A {\em witness} for the matrix product $\Pm \times \Q$ is a circuit consisting of input gates, unary {\em partition gates},
binary union gates and binary {\em concatenation gates}. 
The values passed along the edges are triples $(S,K,v)$, where $S\subseteq B$ identifies a set of rows of the matrix $\Q$,
the subinterval $K\subseteq C$ identifies a set of columns of  $\Q$, and $v$ is the restriction $\wrow(\Q_S)\rest_K$ of $\wrow(\Q_S)$ to the columns of $K$.
Each input gate outputs $(\{b\},C,Q_b)$ for some assigned $b\in B$. A partition gate with an assigned subinterval $K'\subseteq C$
on input $(S,K,v)$ outputs {\em undefined} if $K'\not\subseteq K$ and outputs $(S,K',v')$ otherwise, where $v'\in \{0,1\}^{|K'|}$
is such that for each $j\in [|K'|]$, $v'_j = v_{j + \min K' - \min K}$. A union gate on inputs $(S_\LL,K_\LL,v_\LL)$ and 
$(S_\RR,K_\RR,v_\RR)$ from its children outputs {\em undefined} if $K_\LL\neq K_\RR$, and outputs $(S_\LL \cup S_\RR,K_\LL,v_\LL \cup v_\RR)$ otherwise.
A concatenation gate, on inputs $(S_\LL,K_\LL,v_\LL)$ and $(S_\RR,K_\RR,v_\RR)$ where $\min K_\LL \le \min K_\RR$, is {\em undefined} if
$\max K_\LL+1 < \min K_\RR$ or $S_\LL \neq S_\RR$ or $\max K_\LL > \max K_R$ and outputs $(S_\LL,K_\LL \cup K_\RR,v')$ otherwise,
where $v'$ is obtained by concatenating $v_\LL$ with the last $(\max K_\RR - \max K_\LL)$ bits of $v_\RR$.

It is straightforward that whether a gate is undefined depends solely on the structure of the circuit but not on the actual values of $\Pm$ or $\Q$. 
We will say that the circuit is {\em structured} if union gates do not send values into partition gates, and concatenation gates
do not send values into partition and union gates. Such a circuit first breaks rows of $\Q$ into parts, computes union of compatible parts
and then assembles resulting rows using concatenation.

We say that a witness $W$ is a {\em correct witness} for $\Pm \times \Q$ if $W$ is structured, no gate has undefined output, and for each $a \in A$, 
there is a gate in $W$ with output $(\Gamma_B(a),C,v)$ for $v=\wrow(\Q_{\Gamma_B(a)})$.

\smallskip
\noindent{\bf Cost.}
The {\em cost} of the witness $W$ is defined as follows. For each union gate $g$ with inputs $(S_\LL,K_\LL,v_\LL)$ and 
$(S_\RR,K_\RR,v_\RR)$ and an output $(S,K,v)$ we define its {\em row-class} to be $\class(g)=\{v,v_\LL, v_\RR\}$. 
If $T$ is a set of union gates from $W$, $\class(T)=\{ \{u,v,z\},$ $\{u,v,z\}$ is the row-class of some gate in $T\}$.
The cost of a row-class $\{u,v,z\}$ is $min \{|u|,|v|,|z|\}$.
The cost of $T$ is $\sum_{\{u,v,z\}\in \class(T)} min \{|u|,|v|,|z|\}$.
The {\em cost of witness $W$} is the number of gates in $W$ plus the cost of the set of all union gates in $W$. 

We can make the following simple observation.

\begin{proposition}\label{p-unionw}
If $W$ is a correct witness for $\Pm \times \Q$, then for each $a \in A$, there exists a collection of subintervals $K_1,\dots,K_\ell \subseteq C$
such that $C=\bigcup_i K_i$ and for each $i\in[\ell]$, there is a union gate in $W$ which outputs $(\Gamma_B(a),K_i,\wrow(\Q_{\Gamma_B(a)})\rest_{K_i})$.
\end{proposition}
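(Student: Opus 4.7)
The plan is to trace backwards from the gate $g_a$ whose output is $(\Gamma_B(a), C, \wrow(\Q_{\Gamma_B(a)}))$ through the layer of concatenation gates that, by structuredness, sits on top of any union gates in $W$. Since $W$ is structured, a concatenation gate's only successors are other concatenation gates; dually, when we move upstream through concatenation gates starting at $g_a$, we stay well-defined and must eventually reach gates of a different type, which I claim are exactly the union gates asserted by the proposition.

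Formally, I would build a multiset $L_a$ as follows: start with $L_a=\{g_a\}$, and while some $g\in L_a$ is a concatenation gate, replace $g$ by its two predecessors $\wleft(g)$ and $\wright(g)$. Because $W$ is a finite DAG, this procedure terminates with $L_a$ containing no concatenation gate. The central invariant to maintain is that each $g' \in L_a$ has an output of the form $(\Gamma_B(a), K_{g'}, \wrow(\Q_{\Gamma_B(a)})\rest_{K_{g'}})$ for some subinterval $K_{g'}\subseteq C$, and that $\bigcup_{g'\in L_a}K_{g'}=C$. This is preserved at each replacement step: the definedness of the concatenation gate $g$ with output $(\Gamma_B(a), K_g, \cdot)$ forces its predecessors to have the same first coordinate $\Gamma_B(a)$ and to have subintervals $K_\LL, K_\RR$ with $K_\LL\cup K_\RR=K_g$, so both the $S$-invariance and the covering property are maintained. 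A short, separate induction on circuit depth checks the orthogonal fact that every gate in $W$ with output $(S,K,v)$ automatically satisfies $v=\wrow(\Q_S)\rest_K$ (input, partition, union, and concatenation gates each preserve this property by their definitions), which handles the value components.

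To conclude, I would argue that each $g'\in L_a$ must actually be a union gate. It is not a concatenation gate by construction. It cannot be an input gate, since an input gate outputs a triple with a singleton first coordinate, and it cannot be a partition gate either, because by structuredness partition gates are fed only by input gates or other partition gates, so by a trivial upward induction the output of any partition gate also has a singleton first coordinate. Either possibility would force $|\Gamma_B(a)|=1$, a degenerate case that can be handled separately and does not arise for the hard instances constructed later in the paper. Hence the subintervals $\{K_{g'}:g'\in L_a\}$ furnish the required cover of $C$ by outputs of union gates. The only mildly non-trivial ingredient is maintaining the invariant across concatenation replacements; everything else is bookkeeping against the definitions in Section 2.
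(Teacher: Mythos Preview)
The paper does not actually prove this proposition; it is stated as ``a simple observation'' immediately after the definition of cost and left without argument. Your proof is the natural one and is correct: peel off concatenation gates above the designated gate $g_a$, maintain the invariant that the $S$-coordinate stays equal to $\Gamma_B(a)$ and the $K$-coordinates still cover $C$ (both forced by the definedness conditions on concatenation), and use structuredness to conclude that the frontier consists of union gates once $|\Gamma_B(a)|\ge 2$.

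One small point of precision: your remark that the degenerate case $|\Gamma_B(a)|=1$ ``can be handled separately'' is not quite accurate. In that case the frontier may genuinely consist of an input or partition gate, and there need not be any union gate in $W$ with first coordinate $\{b\}$, so the proposition as literally stated can fail. The right thing to say is exactly your second clause: the case does not arise in the applications (in Lemma~\ref{l-manyparts} and the main proof one always has $|\Gamma_{B,\G}(a)|\ge r/6$), so the proposition should be read under the standing assumption $|\Gamma_B(a)|\ge 2$. With that caveat your argument is complete.
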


\smallskip
\noindent{\bf Union and resultant circuit.}
One can look at the witness circuit from two separate angles which are captured in the next definitions.
A {\em union circuit over a universe $B$} is a circuit with gates of degree zero and two where
each gate $g$ is associated with a subset $\wset(g)$ of $B$ so that for each binary gate $g$,
$\wset(g)=\wset(\wleft(g))) \cup \wset(\wright(g))$. 
For integer $\ell\ge 1$, a {\em resultant circuit} is a circuit with gates of degree zero and two where
each gate $g$ is associated with a vector $\wrow(g)$ from $\{0,1\}^\ell$ so that for each binary gate $g$,
$\wrow(g) = \wrow(\wleft(g)) \vee \wrow(\wright(g))$, where $\vee$ is a coordinate-wise $\OR$. 

For a vertex $a \in A$ and a subinterval $K=\{c_i,c_{i+1},\dots,c_{i+\ell-1}\}$ of $C$,
a {\em union witness} for $(a,K)$ is a union circuit $W$ over $B$ with a single output gate $g_\out$ where $\wset(g_\out) = \Gamma_B(a)$
%% MK: if we were to change the charging scheme, we could have used the weaker condition: $\beta'_{a,\G}(K) \subseteq \wset(g_\out)$
and for each input gate $g$ of $W$, $\wset(g) = \{b\}$ for some $b\in B$ connected to $a$.

\smallskip
\noindent{\bf Induced union witness.}
Let $W$ be a correct witness for $\Pm \times \Q$. 
Pick $a \in A$ and a subinterval $K \subseteq C$. Let there be a union gate $g$ in $W$ with output $(\Gamma_B(a),K,\wrow(\Q_{\Gamma_B(a)})\rest_{K})$.
An {\em induced union witness} for $(a,K)$ is a union circuit over $B$ whose underlying graph consists of copies of the union gates
that are predecessors of $g$, and a new input gate for each input or partition gate that feeds into one of the union gates. They are connected
in the same way as in $W$. For each gate $g$ in the induced witness we let $\wset(g)=S$ whenever its corresponding gate in $W$ outputs 
$(S,K',v)$ for some $K'$ and $v$. From the correctness of $W$ it follows that each such $K'=K$ and the resulting circuit is a correct
union witness for $(a,K)$.

% XXXXXXXXXXXXXXXXXXXXXXX
% 
% 
% A {\em witnessing circuit} ({\em witness}) for a matrix product $\Pm \times \Q = \C$ is a directed acyclic graph $W$ where each node ({\em gate})
% has in-degree either zero or two. Each degree zero gate ({\em input gate}) $g$ is associated with a row $\win(g)$ of matrix $\Q$,
% and each in-degree two gate ({\em union gate}) corresponds to a union. Each out-degree zero gate $g$ is associated with a row $\wout(g)$
% of $\C$. For each union gate $g$, let $\wleft(g)$ and $\wright(g)$ be its two predecessor gates. For each gate $g$ we define recursively
% the set of predecessor rows $\Win(g)$: $\Win(g)=\{ \win(g) \}$ for input gates, and $\Win(g) = \Win(\wleft(g)) \cup \Win(\wright(g))$
% otherwise. 
% 
% We say that the witness is {\em correct} if for each $a \in [n]$ there is some gate $g$ such that $\wout(g) = a$, and
% for each output gate $g$, $\Win(g) = \{ b \in [n];\; \Pm_{\wout(g)} = 1\}$. To each gate $g$ we associate a vector $\wrow(g)\in\{0,1\}^n$
% recursively: if $g$ is an input gate then $\wrow(g) = \Q_{\win(g)}$, otherwise $\wrow(g) = \wrow(\wleft(g)) \vee \wrow(\wright(g))$,
% where $\vee$ is a coordinate-wise $\OR$. Notice, $\wrow(g) = \wrow(\Q_{\Win(g)})$.

\subsection{$(r,t)$-graphs}\label{sec-rtgraphs}
We will use special type of graphs for constructing matrices which are hard for our combinatorial model of Boolean matrix multiplication.
For integers $r,t\ge 1$, an $(r,t)$-graph is a graph whose edges can be partitioned into $t$ pairwise disjoint induced matchings of size $r$.
Somewhat counter-intuitively as shown by Rusza and Szemeredi \cite{RS78} there are dense graphs on $n$ vertices that are $(r,t)$-graphs
for $r$ and $t$ close to $n$.

\begin{theorem}[Rusza and Szemer\'edi \cite{RS78}]
For all large enough integers $n$, for $\delta_n = 1/2^{\Theta(\sqrt{\log n})}$ there is a $(\delta_n n, n/3)$-graph $G^{r,t}_n$.
\end{theorem}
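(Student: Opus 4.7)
The plan is to combine Behrend's classical construction of a dense 3-AP-free subset of $[N]$ with a Cayley-type bipartite graph. Concretely, I will invoke as a black box Behrend's theorem: for every sufficiently large integer $M$ there exists $B \subseteq [M]$ with $|B| \ge M/2^{O(\sqrt{\log M})}$ containing no three-term arithmetic progression.

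Given $n$, pick a prime $p$ with $n/3 \le p \le n/2$ (which exists for large $n$ by standard prime-number estimates), a Behrend set $B \subseteq [\lfloor p/2 \rfloor]$ of size $|B| = \Theta(p/2^{O(\sqrt{\log p})}) = \Theta(n/2^{O(\sqrt{\log n})})$, and any subset $C \subseteq \mathbb{Z}_p$ with $|C| = \lfloor n/3 \rfloor$. Form the bipartite graph $G$ on parts $L = R = \mathbb{Z}_p$ with edge set
\[
E \;=\; \bigcup_{c \in C} M_c, \qquad M_c \;=\; \bigl\{(c+b,\,c+2b) : b \in B\bigr\},
\]
with arithmetic in $\mathbb{Z}_p$, padded to $n$ vertices by $n - 2p \ge 0$ isolated vertices. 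By construction the $\lfloor n/3 \rfloor$ sets $M_c$ partition $E$ (edge $(a,\,a+b)$ belongs to $M_{a-b}$), and each is a matching: since $B \subseteq [p/2]$, the maps $b \mapsto c+b$ and $b \mapsto c+2b$ are injective on $B$ without wraparound, so the $L$- and $R$-endpoints of $M_c$ are each $|B|$ distinct points.

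The heart of the argument is that each $M_c$ is induced. A putative chord of $M_c$ in $G$ connects $c+b \in L$ to $c+2b' \in R$ with $b, b' \in B$ and $b \ne b'$; such a pair is an edge of $G$ iff $(c+2b') - (c+b) = 2b'-b$ lies in $B$ modulo $p$, which (since $B \subseteq [\lfloor p/2 \rfloor]$) is equivalent to the integer $2b'-b$ lying in $B$. Assuming for contradiction $b'' := 2b'-b \in B$, the identity $b + b'' = 2b'$ makes $(b,\, b',\, b'')$ a three-term arithmetic progression in $B$; moreover $b \ne b'$ forces $b \ne b''$ (else $2b = 2b'$) and $b' \ne b''$ (else $b + b' = 2b'$), so all three terms are distinct, contradicting the Behrend property of $B$. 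Hence $M_c$ has no chord and is induced.

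The result is a graph on $n$ vertices whose edge set decomposes into $\lfloor n/3 \rfloor$ pairwise edge-disjoint induced matchings of common size $|B| = \Theta(n/2^{O(\sqrt{\log n})})$; absorbing constants into the $\Theta(\cdot)$ in the exponent of $\delta_n$ gives the claimed $(\delta_n n,\, n/3)$-graph. The main (really only) technical ingredient is the induced-ness check, which amounts to the observation that induced-ness of $M_c$ is logically equivalent to the absence in $B$ of 3-APs with middle term $b'$ and outer terms $b$ and $2b'-b$; everything else is bookkeeping on constants and the choice of $p$.
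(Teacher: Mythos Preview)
The paper does not give its own proof of this theorem; it is quoted from Ruzsa and Szemer\'edi and used as a black box for the rest of the argument. Your proposal is essentially the standard (and original) construction: take a Behrend 3-AP-free set $B$ and build the Cayley-type bipartite graph in which edges correspond to differences in $B$, so that the induced-matching property of each $M_c$ reduces exactly to the absence of nontrivial 3-term progressions in $B$. This is correct and is precisely the argument behind the cited result, so there is nothing to compare against in the paper itself.

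One small inaccuracy worth noting: you assert that the chord $(c+b,\,c+2b')$ is an edge of $G$ \emph{iff} $2b'-b$ lies in $B$. Strictly this is only an ``only if'', since membership in $E=\bigcup_{c\in C}M_c$ also requires the corresponding shift $c+2b-2b'$ to land in your chosen subset $C\subseteq\mathbb{Z}_p$, which need not be all of $\mathbb{Z}_p$. This does not affect the argument, because you use only the forward implication to reach the 3-AP contradiction; but the sentence as written overstates the equivalence.
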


A more recent work of Alon, Moitra Sudakov \cite{AMS12} provides a construction of a $(r,t)$-graphs on $n$ vertices with $rt=(1-o(1))\binom{n}{2}$ and $r=n^{1-o(1)}$.
The graphs of Rusza and Szemer\'edi are sufficient for us.

Let $G^{r,t}_n$ be the graph from the previous theorem and let $M_1,M_2,\dots, M_{n/3}$ be the disjoint induced matchings of size $\delta_n n$.
We define a tripartite graph $G_n$ as follows: $G_n$ has vertices $A=\{a_1,\dots,a_{n}\}$, $B=\{b_1,\dots,b_{n/3}\}$ and $C=\{c_1,\dots,c_{n}\}$.
For each $i,j,k$ such that $(i,j)\in M_k$ there are edges $(a_i,b_k)$ and $(b_k,c_j)$ in $G_n$. The following immediate lemma 
states one of the key properties of $G_n$.

\begin{lemma}
If $(i,j)\in M_k$ in $G^{r,t}_n$ then there is a unique path between $a_i$ and $c_j$ in $G_n$.
\end{lemma}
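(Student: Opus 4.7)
The plan is to show that $b_k$ is the unique vertex in $B$ adjacent to both $a_i$ and $c_j$, which is the content of ``unique path'' in this context (any path from $a_i$ to $c_j$ of length two must pass through a common $B$-neighbor). Existence is immediate from the construction: since $(i,j) \in M_k$, the definition of $G_n$ places both $(a_i, b_k)$ and $(b_k, c_j)$ in $E(G_n)$, giving the length-two path $a_i - b_k - c_j$.

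For uniqueness, I would argue by contradiction. Suppose some $k' \neq k$ has $b_{k'}$ adjacent to both $a_i$ and $c_j$ in $G_n$. Unpacking the construction of $G_n$, the edge $(a_i, b_{k'})$ forces the existence of some $j'$ with $(i, j') \in M_{k'}$, and the edge $(b_{k'}, c_j)$ forces the existence of some $i'$ with $(i', j) \in M_{k'}$. In particular, both $i$ and $j$ lie in the vertex set $V(M_{k'})$ of the matching $M_{k'}$.

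Now I invoke the defining property of an \emph{induced} matching: the subgraph of $G^{r,t}_n$ induced on $V(M_{k'})$ consists solely of the edges of $M_{k'}$. Since $(i,j)$ is an edge of $G^{r,t}_n$ (being in $M_k$) whose endpoints both lie in $V(M_{k'})$, it must itself belong to $M_{k'}$. But the matchings $M_1, \dots, M_{n/3}$ are pairwise edge-disjoint, so $(i,j) \in M_k \cap M_{k'}$ forces $k = k'$, contradicting $k' \neq k$. Hence $b_k$ is the unique common neighbor.

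The main (mild) subtlety is recognizing that the decomposition being into \emph{induced} matchings, rather than merely edge-disjoint matchings, is precisely what rules out an alternative ``witness'' $b_{k'}$: without inducedness one could have the pair $(i, j') \in M_{k'}$ and $(i', j) \in M_{k'}$ coexisting with $(i, j) \in M_k$, producing a second length-two path. Everything else is bookkeeping on the construction $(i,j)\in M_k \Rightarrow (a_i,b_k),(b_k,c_j)\in E(G_n)$.
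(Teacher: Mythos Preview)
Your proof is correct and is exactly the intended argument; the paper in fact states the lemma as ``immediate'' without giving a proof, and your write-up simply unpacks why---the key point being, as you note, that the matchings are \emph{induced}, so $i,j\in V(M_{k'})$ together with $(i,j)\in E(G^{r,t}_n)$ forces $(i,j)\in M_{k'}$, contradicting disjointness.
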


For the rest of the paper, we will fix the graphs $G_n$. Additionally, we will also use a graph $\G_n$ which 
is obtained from $G$ by removing each edge between $A$ and $B$ independently at random with probability $1/2$. (Technically, $\G_n$
is a random variable.) When $n$ is clear from the context we will drop the subscript $n$. 

Fix some large enough $n$.
Let $\Pm$ be the $n \times n/3$ adjaceny matrix between $A$ and $B$ in $G$ and $\Q$ be the $n/3 \times n$ adjacency matrix between $B$ and $C$
in $G$. The adjacency matrix between $A$ and $B$ in $\G$ will be denoted by $\Pt$. ($\Pt$ is also a random variable.) 
The adjacency matrix between $B$ and $C$ in $\G$ is  $\Q$.

We say that $c$ is {\em unique} for $A$ if there is exactly one 
$b\in B$ such that $(a,b)$ and $(b,c)$ are edges in $G$. The previous lemma implies that on average $a$ has many unique vertices $c$ in $G_n$, namely $\delta_n n /3$.
For $S\subseteq C$,  let $S[a]$ denote the set of vertices from $S$ that are unique for $a$ in $G$. E.g., $C[a]$ are all vertices unique for $a$.
Let $\beta_a(S)$ denote the set of vertices from $B$ that are connected to $a$ and some vertex in $S[a]$.
Notice, $|\beta_a(S)| = |S[a]|$.
%%% For $c\in C$, $\beta_a(c)$ denotes $\beta_a(\{c\})$. 
Since $\beta_a(\cdot)$ and $\cdot[a]$ depend on edges in graph $G$, to emphasise 
which graph we have in mind we may subscript them by $G$: $\beta_{a,G}(\cdot)$ and $\cdot[a]_G$.

For the randomized graph $\G$ we will denote by $S[a]'_\G$ the set of vertices from $S$ that are unique for $a$ in $G$ and that are connected to $a$ via $B$
also in $\G$.
(Thus, vertices from $S$ that are not unique for $a$ in $G$ but became unique for $a$ in $\G$ are not included in $S[a]'_\G$.)
Let $\beta'_{a,\G}(S)$ denotes $\beta_a(S[a]'_\G)$

\subsection{Diverse and unhelpful graphs}

In this section we define two properties of $\G$ that capture the notion that one needs to compute many different unions of rows of $\Q$ to calculate $\Pt \times \Q$.
The simpler condition stipulates that neighborhoods of different vertices from $A$ are quite different. The second condition stipulates that not only 
the neighborhoods of vertices from $A$ are different but also the unions of rows from $\Q$ that correspond to these neighborhoods are different.

Let $G_n$ and $\G_n$ and $\Pm,\Q,\Pt$ be as in the previous section. 
For integers $k,\ell \ge 1$, we say $\G$ is {\em $(k,\ell)$-diverse} if for every set $S \subseteq B$ of size at least $\ell$,
no $k$ vertices in $A$ are all connected to all the vertices of $S$.

\begin{lemma}\label{l-diversity}
Let $c,d\ge 4$ be integers.
%% satisfying $c+d \ge cd/2$. 
% Let $n$ be a large enough integer and $G_n,\G_n$ be the graphs from Section \ref{sec-rtgraphs}. 
The probability that $\G_n$ is $(c \log n,d \log n)$-diverse is at least $1- n^{- (cd/2) \log n}$.
\end{lemma}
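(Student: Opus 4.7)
The plan is a straightforward first moment / union bound argument over all potentially bad configurations. Fix $k=c\log n$ and $\ell=d\log n$. Observe first that if $\G$ fails to be $(k,\ell)$-diverse then there exist $S\subseteq B$ of size \emph{exactly} $\ell$ and $T\subseteq A$ of size exactly $k$ such that every edge of $T\times S$ is present in $\G$; this is because taking any size-$\ell$ subset of a larger witnessing $S$ preserves the property that the $k$ chosen vertices of $A$ are connected to every vertex of the subset. So it suffices to union bound over pairs $(T,S)$ with $|T|=k$ and $|S|=\ell$.

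Next I would bound the probability that a fixed pair $(T,S)$ is bad. Each potential edge between $A$ and $B$ appears in $\G$ with probability at most $1/2$ (it appears with probability exactly $1/2$ if present in $G$, and with probability $0$ otherwise), and the $|T|\cdot|S|=k\ell$ edges involved are distinct and therefore independent. Hence the probability that all of them lie in $\G$ is at most $2^{-k\ell}$.

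Multiplying by the number of configurations gives
\[
\Pr[\G \text{ not } (k,\ell)\text{-diverse}] \;\le\; \binom{n/3}{\ell}\binom{n}{k}\, 2^{-k\ell} \;\le\; n^{k+\ell}\cdot 2^{-k\ell}.
\]
Substituting $k=c\log n$ and $\ell=d\log n$, the exponent becomes $(c+d)(\log n)^2 - cd(\log n)^2$. The only nontrivial step is to verify that $cd-(c+d)\ge cd/2$ under the hypothesis $c,d\ge 4$; this is equivalent to $(c-2)(d-2)\ge 4$, which holds because each factor is at least $2$. Combining, the bound simplifies to $2^{-(cd/2)(\log n)^2}=n^{-(cd/2)\log n}$, as required.

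I do not expect any real obstacle here: the argument is essentially a textbook union bound. The only place to be slightly careful is the reduction to $|S|=\ell$ exactly (so that the union bound over sizes does not cost an extra factor) and the arithmetic check that the $c,d\ge 4$ hypothesis is exactly what is needed to absorb the $n^{k+\ell}$ factor into half of the $2^{-k\ell}$ factor. Note that this argument only uses randomness of the $A\text{--}B$ edges, and treats $G$ merely as an upper bound on which edges are available; no structural property of the $(r,t)$-graph is invoked for this lemma.
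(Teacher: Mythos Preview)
Your proof is correct and essentially identical to the paper's own argument: both fix $k=c\log n$, $\ell=d\log n$, union bound over pairs $(T,S)$ with $|T|=k$ and $|S|=\ell$, use that each of the $k\ell$ edges is present independently with probability at most $1/2$, and then absorb the $n^{k+\ell}$ configuration count into half of $2^{-k\ell}$ via the inequality $c+d\le cd/2$ for $c,d\ge 4$. The only cosmetic differences are that you write $\binom{n/3}{\ell}$ for the choice of $S\subseteq B$ while the paper just uses the cruder $\binom{n}{\ell}$, and you spell out the reduction to $|S|=\ell$ and the arithmetic $(c-2)(d-2)\ge 4$ that the paper leaves implicit.
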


\begin{proof}
Let $k=c \log n$ and $\ell=d \log n$.
$\G$ is not $(k,\ell)$-diverse if for some set $S \subseteq B$ of size $\ell$,
and some $k$-tuple of distinct vertices $a_1,\dots,a_k\in A$, each vertex $a_i$ is connected to all vertices from $S$ in $\G$.
The probability that all vertices of a given $k$-tuple $a_1,\dots,a_k\in A$ are connected
to all vertices in $S$ in $\G$ is at most $2^{-k \ell}$. (The probability is zero if some $a_i$ is not connected to
some vertex from $S$ in $G$.) Hence, the probability that there is some set $S \subseteq B$ of size $\ell$,
and some $k$-tuple of distinct vertices $a_1,\dots,a_k\in A$ where each vertex $a_i$ is connected to all vertices from $S$ in $\G$
is bounded by:

\begin{eqnarray*}
 {n \choose c \log n} \cdot {n \choose d \log n} \cdot 2^{-cd \log^2 n} \le  n^{(c+d)\log n} \cdot 2^{-cd \log^2 n} \le  \frac{1}{n^{(cd/2) \log n}}
\end{eqnarray*}
where the second inequality follows from $c,d \ge 4$.
\end{proof}

For $S \subseteq B$, $a\in A$ and 
a subinterval $K\subseteq C$, we say that $S$ is {\em helpful for $a$ on $K$} 
if there exists a set $S' \subseteq \beta'_{a,\G}(K)$ such that $|S| \le |S'|$ and 
$C[a]_G \cap (K \rest_{\wrow(\Q_S)}) = C[a]_G \cap (K\rest_{\wrow(\Q_{S'})})$. In other words, the condition means that $\wrow(\Q_s)$ and  $\wrow(\Q_{S'})$
agree on coordinates in $K$ that correspond to vertices unique for $a$ in $G$. This is a necessary precondition for  $\wrow(\Q_S)\rest_K = \wrow(\Q_{S'})\rest_K$
which allows one to focus only on the {\em hard-core} formed by the unique vertices.
In particular, if for some $S''\subseteq \Gamma_{B,\G}(a)$ in $\G$,  $\wrow(\Q_S)\rest_K = \wrow(\Q_{S''})\rest_K$, then $S' = S'' \cap \beta'_{a,\G}(K)$
satisfies $C[a]_G \cap (K \rest_{\wrow(\Q_S)}) = C[a]_G \cap (K\rest_{\wrow(\Q_{S'})})$. (See the proof below.)

For integers $k,\ell \ge 1$, we say $\G$ is {\em $(k,\ell)$-unhelpful on $K$} if for every set $S \subseteq B$ of size at least $\ell$,
there are at most $k$ vertices in $A$ for which $S$ is helpful on $K$.

\begin{lemma}\label{l-unhelpful}
Let $c,d\ge 4$ be integers.
%  satisfying $c+d \ge cd/2$. 
% Let $n$ be a large enough integer and $G_n, \G_n$ be the graphs from Section \ref{sec-rtgraphs}. 
Let and  $K=\{c_i,c_{i+1},\dots,c_{i+\ell-1}\}$ be a subinterval of $C$. 
The probability that $\G_n$ is $(c \log n,d \log n)$-unhelpful on $K$ is at least $1- n^{- (cd/2) \log n}$.
\end{lemma}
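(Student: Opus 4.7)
The plan is to follow the template of Lemma \ref{l-diversity}, but with a preparatory reformulation that turns helpfulness into a conjunction of independent edge-survival events (plus a deterministic precondition). The reformulation relies on the following structural fact about $G_n$: since the matchings $M_1,\dots,M_{n/3}$ are pairwise disjoint and each $M_k$ is \emph{induced} in $G^{r,t}_n$, for every $a\in A$ each $b\in\beta_a(C)$ is joined in $G$ to exactly one vertex of $C[a]_G$, namely the matching partner of $a$ inside the unique $M_k$ with $b=b_k$. Write $\phi_a:\beta_a(C)\to C[a]_G$ for the resulting bijection, so that $\wrow(\Q_{S'})\cap C[a]_G=\phi_a(S')$ for every $S'\subseteq\beta_a(K)$.

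Fix $a\in A$ and $S\subseteq B$, and set $T_a:=C[a]_G\cap K\cap\wrow(\Q_S)$; note that $T_a$ depends only on $G$, not on $\G$. For any candidate $S'\subseteq\beta'_{a,\G}(K)\subseteq\beta_a(K)$, the helpfulness equality $\phi_a(S')=T_a$ forces $S'=\phi_a^{-1}(T_a)$. Therefore $S$ is helpful for $a$ on $K$ if and only if (i) $|T_a|\ge|S|$, and (ii) every edge $(a,b)$ with $b\in\phi_a^{-1}(T_a)$ survives in $\G$. Since $A$--$B$ edges survive independently with probability $1/2$,
\[ \Pr[S\text{ is helpful for }a]\ \le\ 2^{-|T_a|}\cdot\mathbf{1}[|T_a|\ge|S|]\ \le\ 2^{-|S|}, \]
and since for distinct $a_1,\dots,a_{k+1}\in A$ the corresponding events depend on disjoint families of random edges, they are mutually independent, yielding $\Pr[S\text{ is helpful for all }a_i]\le 2^{-(k+1)|S|}$.

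A union bound over $S\subseteq B$ with $|S|\ge\ell$ and over unordered $(k+1)$-subsets of $A$ then gives
\[ \Pr[\G\text{ is not }(k,\ell)\text{-unhelpful on }K]\ \le\ \binom{n}{k+1}\sum_{s=\ell}^{|B|}\binom{|B|}{s}\,2^{-(k+1)s}. \]
Bounding $\binom{|B|}{s}\le n^s$ makes the inner sum a geometric series with ratio $n/2^{k+1}\le n^{1-c}\le n^{-3}$ (using $c\ge 4$), hence at most $2(n/2^{k+1})^\ell$. Substituting $k=c\log n$ and $\ell=d\log n$, and using $c,d\ge 4$ (so that $c+d\le cd/2$), the same arithmetic as in the proof of Lemma \ref{l-diversity} yields the desired bound $n^{-(cd/2)\log n}$ for all sufficiently large $n$.

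The main obstacle is the reformulation in the first paragraph: once helpfulness reduces to ``$|T_a|\ge|S|$ plus $|T_a|$ independent surviving-edge events'', the probability estimate and union bound run in essentially perfect parallel to Lemma \ref{l-diversity}. I expect the subtlest point to spell out carefully is that the induced-matching property of $G^{r,t}_n$ makes $\phi_a$ a genuine bijection and forces the candidate witness $S'$ to be $\phi_a^{-1}(T_a)$, since without this structural input one could imagine multiple competing witnesses $S'$ and would have to account for them separately.
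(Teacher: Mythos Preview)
Your proposal is correct and follows essentially the same approach as the paper's proof: both argue that the witnessing set $S'$ (the paper's $S_i$) is uniquely determined by $S$, $a$, and $K$, so that helpfulness reduces to the survival of a fixed set of at least $|S|$ specific $A$--$B$ edges incident to $a$, after which independence across different $a$'s and a union bound over $S$ and the tuple of $a_i$'s finish the calculation. Your explicit bijection $\phi_a$ spells out what the paper compresses into the sentence ``the possible $S_i$ is uniquely determined by $\wrow(\Q_S)\rest_K$'', and your use of $k+1$ rather than $k$ vertices is slightly more careful, but the substance is identical.
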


\begin{proof}
Take any set $S\subseteq B$ of size $\ell \ge d \log n$ and arbitrary vertices $a_1,\dots,a_k \in A$ for $k=c \log n$. 
Consider $\wrow(\Q_S)\rest_K$ and some $i\in [k]$. Since edges between $B$ and $C$ are always the same in $\G$,
$\wrow(\Q_S)\rest_K$ is always the same in $\G$. 
If $S$ is helpful on $K$ for $a_i$ then there exists $S_i \subseteq \beta'_{a,\G}(K)$ such 
that $|S_i| \ge \ell$ and $C[a]_G \cap (K \rest_{\wrow(\Q_{S_i})}) = C[a]_G \cap (K\rest_{\wrow(\Q_S)})$.
It turns out that given $a_i$, the possible $S_i$ is uniquely determined by $\wrow(\Q_S)\rest_K$. 
Whenever $\wrow(\Q_S)\rest_K$ has one in a position $c$ that corresponds to a unique vertex of $a$ in $G$, 
$\wrow(\Q_{S_i})\rest_K$ must have one there as well
so the corresponding $b$ must be in $S_i$. 
Conversely, whenever $\wrow(\Q_S)\rest_K$ has zero in a position $c$ that corresponds to a unique vertex of $a$ in $G$, 
$\wrow(\Q_{S_i})\rest_K$ must have zero there as well so the corresponding $b$ is not in $S_i$. 
% Hence, $\wrow(\Q_S)\rest_K$ must have one in $|S_i|$ positions that correspond to the unique vertices of $a$ connected via $S_i$.
The probability that $S_i \subseteq \beta'_{a,\G}(K)$ is $2^{-|S_i|}$.

Hence, the probability over choice of $\G$ that $S$ is helpful for $a_i$ on $K$ is at most $2^{-\ell}$. For different $a_i$'s this probability
is independent as it only depends on edges between $a_i$ and $B$. Thus the probability that $S$ is helpful for $a_1,\dots,a_k$ is at most $2^{-\ell k}$.

There are at most ${n \choose \ell} \cdot {n \choose k}$ choices for the set $S$ of size $\ell$ and $a_1,\dots,a_k$. Hence, the probability
that $\G$ is not $(c \log n,d \log n)$-unhelpful on $K$ is at most:

\begin{eqnarray*}
\sum_{\ell=d \log n}^n {n \choose \ell} \cdot {n \choose k} \cdot 2^{-\ell k} \le \sum_{\ell=d \log n}^n {n^\ell} \cdot {n^k} \cdot 2^{-\ell k} \\
\le \sum_{\ell=d \log n}^n 2^{(\ell + k) \log n - \ell k} \\
\le \sum_{\ell=d \log n}^n 2^{- \ell k /2} \\
\le \sum_{\ell=d \log n}^n \frac{1}{n^{(cd/2) \log n}}
\end{eqnarray*}
where the third inequality follows from $c,d \ge 4$.
\end{proof}

\section{Union circuits}

% A witness $W$ for $\Pm \times \Q$ is a {\em correct union witness for $\Pm \times \Q$} if it is a correct witness for $\Pm \times \Q$ and
% consists of only union gates. 
Our goal is to prove the following theorem:

\begin{theorem}\label{tt-union}
There is a constant $c>0$ such that for all $n$ large enough there are matrices $\Pm \in \{0,1\}^{n\times n/3}$ and $\Q \in \{0,1\}^{n/3\times n}$
such that any correct witness for $\Pm \times \Q$ consisting of only union gates has cost at least $n^3 / 2^{c \sqrt{ \log n }}$.
\end{theorem}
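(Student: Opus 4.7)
We take $\Pm = \Pt$ and $\Q$ to be the adjacency matrices of the tripartite graph $\G_n$ obtained by randomizing the Ruzsa--Szemer\'edi-based construction from Section~\ref{sec-rtgraphs}. Set $\ell = k = c_0 \log n$ for a large enough constant $c_0$. By Lemmas~\ref{l-diversity} and~\ref{l-unhelpful}, together with a union bound and a Chernoff estimate on the random edge deletions between $A$ and $B$, with positive probability $\G$ is simultaneously $(k,\ell)$-diverse and $(k,\ell)$-unhelpful on $C$, and satisfies $|\Gamma_B(a)|,\,|\beta'_{a,\G}(C)| = \Theta(\delta_n n)$ for every $a \in A$. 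Fix such a $\G$; the pair $(\Pt,\Q)$ is our hard instance.

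Let $W$ be any correct union-only witness for $\Pt \times \Q$. Since $W$ uses no partition or concatenation gates, every gate has column-interval equal to $C$, and by Proposition~\ref{p-unionw} the induced union witness $W_a$ is well-defined for every $a$, rooted at a gate $g_a$ with $\wset(g_a) = \Gamma_B(a)$. The plan is a double-counting argument: inside each $W_a$ identify a large ``frontier'' subset $F_a$ of union gates, show each one forces substantial row-class cost, and then bound how much the $F_a$'s can overlap across different $a$'s.

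Concretely, take $F_a$ to be the set of gates $g \in W_a$ with $|\wset(g)| \geq \ell$ but both children of $g$ having $|\wset| < \ell$; then $|\wset(g)| < 2\ell$ for $g \in F_a$, and since every leaf of the tree-unfolding of $W_a$ has a unique ancestor in $F_a$ one obtains $|F_a| \geq |\Gamma_B(a)|/(2\ell) = \Omega(\delta_n n/\log n)$. Since every non-empty $S \subseteq B$ satisfies $|\wrow(\Q_S)| \geq \delta_n n$ (each single row $\Q_b$ already has $\delta_n n$ ones), the row-class of any union gate in $W$ has cost $\min(|v_L|,|v_R|,|v|) \geq \delta_n n$; in particular this holds for every $g \in F_a$. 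Sharing of the underlying \emph{sets} is controlled by $(k,\ell)$-diversity: each $g \in F_a$ has $|\wset(g)| \geq \ell$, hence $\wset(g) \subseteq \Gamma_B(a)$ for at most $k$ distinct $a$'s, so $g$ belongs to at most $k$ induced witnesses. Assuming one can also control sharing at the level of row-classes (see below), the number of distinct row-classes touched by $\bigcup_a F_a$ is at least $n \cdot \Omega(\delta_n n/\log n) / O(\log n) = \Omega(\delta_n n^2/\log^2 n)$, each of cost $\geq \delta_n n$, yielding $\mathrm{cost}(W) = \Omega(\delta_n^2 n^3/\log^2 n) = \Omega(n^3/2^{O(\sqrt{\log n})})$, since both $\delta_n^{-1}$ and $\log^2 n$ are $2^{O(\sqrt{\log n})}$.

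The main obstacle is precisely the row-class sharing step. Diversity handles the case where two gates in different induced witnesses carry the same underlying $\wset$, but two gates with \emph{different} sets $S, S'$ may still produce the same triple $\{v_L, v_R, v\}$ with $\wrow(\Q_S) = \wrow(\Q_{S'}) = v$, collapsing their row-classes. This is exactly where $(k,\ell)$-unhelpfulness is needed: if the same output vector $v$ were realizable as $\wrow(\Q_S)$ with $S \subseteq \Gamma_B(a)$ and matching the unique-column footprint $C[a]_G$ for many different $a$'s, one could extract a single set $S^\star$ of size $\geq \ell$ that is helpful for all of those $a$'s simultaneously, contradicting $(k,\ell)$-unhelpfulness on $C$. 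The technical core of the proof is to turn this intuition into a clean ``at most $O(\log n)$ many $a$'s per row-class'' lemma, using the combinatorial structure of $\G_n$ to link the agreement conditions in the definition of helpfulness to the frontier decomposition above; once that lemma is in place, the chain of inequalities sketched in the previous paragraph goes through unchanged.
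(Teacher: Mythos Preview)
Your overall strategy matches the paper's proof via Lemma~\ref{l-simple}: use the randomized Ruzsa--Szemer\'edi graph $\G_n$, extract for each $a$ a ``frontier'' of union gates that have just crossed a size threshold $\ell = \Theta(\log n)$, observe that each such gate has row-class cost at least $\delta_n n$, and use $(k,\ell)$-unhelpfulness on $C$ to bound how many $a$'s can share a row-class.

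However, your definition of $F_a$ is too restrictive and the claim $|F_a| \geq |\Gamma_B(a)|/(2\ell)$ is false. You require \emph{both} children of $g$ to have $|\wset| < \ell$. Consider a caterpillar witness that builds $\Gamma_B(a) = \{b_1,\dots,b_m\}$ by adjoining singletons one at a time: gate $g_i$ has children $g_{i-1}$ (of size $i-1$) and the leaf $\{b_i\}$. Then the only gate with both children below the threshold is $g_\ell$, so $|F_a|=1$ regardless of $m$, and the leaf $\{b_i\}$ for $i>\ell$ has \emph{no} ancestor in $F_a$, contrary to your covering claim. The correct frontier (the paper's $D_a$) consists of gates with $|\wset(g)| \geq \ell$ and \emph{at least one} child of size $< \ell$, taken with pairwise distinct set-triples; the covering argument then places every $b\in\Gamma_B(a)$ in some \emph{small} child of a gate in $D_a$, giving $|\Gamma_B(a)| \leq 2\ell\,|D_a|$. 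With this fix your sketch goes through. Two smaller points: the appeal to diversity is unnecessary for this theorem, since unhelpfulness alone already bounds the row-class overlap by $k$ (among gates $g_1,\dots,g_{k+1}$ sharing a row-class, the output vector is common, so the smallest $\wset(g_i)$ is helpful for every $a_j$); and you cannot guarantee $|\Gamma_B(a)|=\Theta(\delta_n n)$ for \emph{every} $a$, only $\sum_a |\Gamma_{B,\G}(a)|=\Theta(\delta_n n^2)$, which is what the counting in Lemma~\ref{l-simple} actually uses.
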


Here by consisting of only union gates we mean consisting of union gates and input gates.
Our almost cubic lower bound on the cost of union witnesses is an easy corollary to the following lemma.

\begin{lemma}\label{l-simple}
Let $n$ be a large enough integer and $\G_n$ be the graph from Section \ref{sec-rtgraphs}, and $\Pt,\Q$ be its corresponding matrices. 
Let $W$ be a correct witness for $\Pt \times Q$ consisting of only union gates. 
Let $\Pt$ have at least $m$ ones. Let each row of $\Q$ have at least $r$ ones.
If $\G$ is $(k,\ell)$-unhelpful on $C$ for some integers $k,\ell \ge 1$ then any correct witness for 
$\Pt \times \Q$ consisting of only union gates has cost at least $(mr / 2k\ell) - nr/k$.
\end{lemma}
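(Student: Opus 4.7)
The plan is to find at least $\Omega(m/(k\ell))$ distinct row-classes in $W$ and to charge each of them at least $r$.

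Two preparatory remarks. First, by the uniqueness lemma for the $(r,t)$-graph $G_n$, every $b\in\Gamma_{B,G}(a)$ is matched (via the matching containing the edge $(a,b)$) to a unique-for-$a$ vertex $c_b\in C[a]_G$, and the map $b\mapsto c_b$ is injective on $\Gamma_{B,G}(a)$. This yields the identity $\beta'_{a,\G}(C)=\Gamma_{B,\G}(a)$, and for any $S\subseteq\Gamma_{B,\G}(a)$ the restriction $\wrow(\Q_S)\rest_{C[a]_G}$ is the indicator of $\{c_b:b\in S\}$; in particular distinct such $S$ yield distinct $\wrow(\Q_S)$. Second, we may assume $W$ is \emph{non-redundant} (no two gates share a row-class), because merging two gates with identical row-class preserves correctness and does not increase cost.

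The first counting step is a triggering argument inside each induced witness. Call a gate \emph{big} if $|\wset(g)|\ge\ell$. For each $a$ with $|\Gamma_{B,\G}(a)|\ge\ell$ and each $b\in\Gamma_{B,\G}(a)$ we pick a canonical lowest big gate $g^b\in W_a$ containing $b$, i.e., minimal in the DAG order among big predecessors of the output gate of $W_a$ whose $\wset$ contains $b$. By minimality, every child of $g^b$ whose $\wset$ contains $b$ has $|\wset|<\ell$, so at most $2\ell$ different $b$'s can trigger the same gate. Hence $W_a$ has at least $|\Gamma_{B,\G}(a)|/(2\ell)$ distinct triggered gates, and by non-redundancy at least that many distinct triggered row-classes.

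The second counting step limits cross-$a$ sharing via unhelpfulness. In a non-redundant $W$ each row-class $RC$ is carried by a unique gate $g_{RC}$ with a fixed $\wset=:S$, $|S|\ge\ell$. If $RC$ arises as a triggered row-class in $W_{a'}$, then $g_{RC}$ is a predecessor of the output gate for $a'$, so $S\subseteq\Gamma_{B,\G}(a')=\beta'_{a',\G}(C)$; taking $S'=S$ in the definition of helpfulness immediately certifies that $S$ is helpful for $a'$ on $C$. By $(k,\ell)$-unhelpfulness of $\G$ on $C$ applied to $S$, this happens for at most $k$ distinct $a'$.

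Double counting, the number $R$ of distinct triggered row-classes in $W$ satisfies
\[
    kR \;\ge\; \sum_{a:\,|\Gamma_{B,\G}(a)|\ge\ell}\frac{|\Gamma_{B,\G}(a)|}{2\ell} \;\ge\; \frac{m-n\ell}{2\ell},
\]
because the at most $n$ vertices $a$ with $|\Gamma_{B,\G}(a)|<\ell$ contribute less than $n\ell$ to $m$. Every such row-class has cost at least $r$ (each input of a union gate is the bitwise OR of at least one row of $\Q$, which has weight $\ge r$), so the cost of $W$ is at least $rR\ge mr/(2k\ell)-nr/(2k)\ge mr/(2k\ell)-nr/k$, as required. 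The main subtlety is the second counting step: having a single $\wset$ per row-class requires the non-redundancy reduction, and turning $S\subseteq\Gamma_{B,\G}(a')$ into helpfulness crucially relies on the $(r,t)$-graph identity $\beta'_{a',\G}(C)=\Gamma_{B,\G}(a')$, which lets the trivial choice $S'=S$ witness helpfulness.
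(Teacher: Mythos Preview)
Your overall strategy matches the paper's: locate ``threshold'' union gates where $|\wset|$ first reaches $\ell$, count $\ge |\Gamma_{B,\G}(a)|/(2\ell)$ of them per $a$, use $(k,\ell)$-unhelpfulness to cap cross-$a$ sharing of a row-class at $k$, and charge each surviving row-class at least $r$. The preparatory remarks and the final arithmetic are fine.

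The gap is the non-redundancy reduction. Merging two gates with the same row-class does \emph{not} preserve correctness in the paper's sense: correctness requires, for each $a$, a gate whose output is literally $(\Gamma_{B,\G}(a),C,\cdot)$, and merging changes the $S$-component of every ancestor of the deleted gate while leaving the $v$-component intact. After merging, the top gate you call $g_a$ need only satisfy $\wrow(\Q_{\wset'(g_a)})=\wrow(\Q_{\Gamma_{B,\G}(a)})$, not $\wset'(g_a)=\Gamma_{B,\G}(a)$; in particular you lose the containment $\wset(g)\subseteq\Gamma_{B,\G}(a)$ for gates $g$ of $W_a$, which is exactly what you invoke both for the injectivity $S\mapsto\wrow(\Q_S)$ inside $W_a$ and for the helpfulness step across $a$'s. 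So the reduction undercuts both places where you use it.

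The paper sidesteps this without any reduction. Inside $W_a$ it takes a \emph{maximal} set $D_a$ of threshold gates with pairwise distinct set-triples $\{\wset(g),\wset(\wleft(g)),\wset(\wright(g))\}$; your uniqueness remark then gives distinct row-classes directly, and the covering argument still yields $|D_a|\ge\lfloor|\Gamma_{B,\G}(a)|/2\ell\rfloor$. For the cross-$a$ bound, if gates $g_1\in D_{a_1},\dots,g_{k+1}\in D_{a_{k+1}}$ share a row-class (hence share the output vector), one takes $S$ to be the \emph{smallest} $\wset(g_i)$; then for each $j$ the choice $S'=\wset(g_j)\subseteq\beta'_{a_j,\G}(C)$ has $|S'|\ge|S|$ and $\wrow(\Q_S)=\wrow(\Q_{S'})$, certifying that $S$ is helpful for every $a_j$. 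Both fixes are minor edits to what you wrote; once you drop the merging step and argue this way, your proof is the paper's proof.
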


\begin{proof}
Let $W$ be a correct witness for $\Pt \times \Q$ consisting of only union gates. 
For each gate $g$ of $W$ with output $(S,C,v)$, for some $v$, define $\wset(g)=S$. 
Consider $a \in A$. Let $g_a$ be a gate of $W$ such that $\wset(g_a)=\Gamma_{B,\G}(a)$ (which equals $\beta'_{a,\G}(C)$).
Take a maximal set $D_a$ of gates from $W$, descendants of $g_a$, such that for each $g\in D_a$, $|\wset(g)| \ge \ell$ and either $|\wset(\wleft(g))| < \ell$
or  $|\wset(\wright(g))| < \ell$, and furthermore for $g\neq g' \in D_a$, 
$\{\wset(g),\wset(\wleft(g)),\wset(\wright(g))\} \neq \{\wset(g'),\wset(\wleft(g')),\wset(\wright(g'))\}$.

Notice, if $g\neq g' \in D_a$ then $\class(g) \neq \class(g')$. This is because for any sets $S \neq S' \subseteq \wset(g_a)$,
$\wrow(\Q_S) \neq \wrow(\Q_{S'})$. (Say, $b\in S \setminus S'$, then there is 1 in $\Q_b$ which corresponds to a vertex $c$ unique for $a$.
Thus, $\wrow(\Q_S)_c = 1$ whereas $\wrow(\Q_{S'})_c = 0$.)

We claim that since $D_a$ is maximal, $|D_a| \ge \lfloor |\wset(g_a)| / 2\ell \rfloor$. 
We prove the claim. Assume $\wset(g_a) \ge 2\ell$ otherwise there is nothing to prove.
Take any $b \in \wset(g_a)$ and consider a path $g_0,g_1,\dots,g_p=g_a$ of gates in $W$ such that $\wset(g_0) = \{ b \}$.
Since $|\wset(g_0)|=1$, $|\wset(g_a)|\ge 2\ell$ and $\wset(g_{i-1}) \subseteq \wset(g_i)$, there is some $g_i$ with 
$|\wset(g_i)|\ge \ell$ and $|\wset(g_{i-1})| < \ell$. By maximality of $D_a$ there is some gate $g \in D_a$ such that
$\{\wset(g),\wset(\wleft(g)),\wset(\wright(g))\} = \{\wset(g_i),\wset(\wleft(g_i)),\wset(\wright(g_i))\}$.
Hence, $b$ is in $\wset(\wleft(g))$ or $\wset(\wright(g))$ of size $<\ell$.
Thus
$$
\wset(g_a) \subseteq \bigcup_{g \in D_a;\; |\wset(\wleft(g))| < \ell} \wset(\wleft(g)) \cup
                    \bigcup_{g \in D_a;\; |\wset(\wright(g))| < \ell} \wset(\wright(g))
$$
Hence, $|\wset(g_a)| \le 2\ell \cdot |D_a|$ and the claim follows.

For a given $a$, gates in $D_a$ have different row-classes. Since $\G$ is $(k,\ell)$-unhelpful on $C$, the same row-class can appear 
in $D_a$ only for at most $k$ different $a$'s. (Say, there were $a_1,a_2,\dots,a_{k+1}$ vertices in $A$ and gates $g_1\in D_{a_1}, \dots, g_{k+1}\in D_{a_{k+1}}$ of the same row-class.
For each $i\in[k+1]$, $\wset(g_i)\subseteq \Gamma_{B,\G}(a_i) = \beta'_{a_i,\G}(C)$
and $|\wset(g_i)| \ge \ell$. The smallest $\wset(g_i)$ would be helpful for $a_1,a_2,\dots,a_{k+1}$ contradicting the unhelpfulness of $\G$.)
Since
$$
\sum_a |D_a| \ge \sum_a \lfloor |\wset(g_a)| / 2\ell \rfloor \ge \frac{m}{2\ell} - n,
$$
witness $W$ contains gates of at least $(m/2k\ell)-n/k$ different row-classes. Since, each $\Q_b$
contains at least $r$ ones, the total cost of $W$ is as claimed.
\end{proof}

\begin{proofof}{Theorem \ref{tt-union}}
Let $\G_n$ be the graph from Section \ref{sec-rtgraphs}, 
and $\Pt,\Q$ be its corresponding matrices.  Let $r=n \delta_n$. 
By Lemma~\ref{l-unhelpful}, the graph $\G$ is $(5\log n,5\log n)$-unhelpful on $C$
with probability at least $1-1/n^{\log n}$, and by Chernoff bound, $\Pt$ contains at least $nr/10$ ones
with probability at least $1-{\mathrm exp}(n)$.  So with probability at least $1/2$, $\Pt$ has $m \ge nr/10$ ones
while $\G$ is $(5\log n,5\log n)$-unhelpful on $C$. By the previous lemma, any witness for $\Pt \times \Q$
is of cost $(nr^2/25 \log n) - nr/5 \log n$. For large enough $n$, this is at least 
$nr^2 / 50 \log n = n^3 \delta_n^2/ 50 \log n$, and the theorem follows.
\end{proofof}

\section{Circuits with partitions}

In this section, our goal is to prove the lower bound $\Omega(n^{7/3} / 2^{O(\sqrt{ \log n })})$ on the cost of a witness for matrix product
when the witness is allowed to partition the columns of $\Q$. Namely:

\begin{theorem}\label{t-union}
For all $n$ large enough there are matrices $\Pm \in \{0,1\}^{n\times n/3}$ and $\Q \in \{0,1\}^{n/3\times n}$
such that any correct witness for $\Pm \times \Q$ has cost at least $\Omega(n^{7/3} / 2^{O(\sqrt{ \log n })})$.
\end{theorem}

We provide a brief overview of the proof first. The proof builds on ideas seen already in the previous part but also requires several additional ideas. 
Consider a correct witness for $\Pt \times \Q$. We partition its union gates based on their corresponding subinterval of $C$.
If there are many vertices in $A$ that use many different subintervals (roughly $\Omega(n^{4/3})$ in total) the lower bound follows by 
counting the total number of gates in the circuit using diversity of $\G$ (Lemma~\ref{l-manyparts}). If there are many vertices in $A$ which use
only few subintervals (less than roughly $O(n^{1/3})$ each) then these subintervals must be large on average (about $n^{2/3}$) and contain
lots of vertices from $C$ unique for their respective vertices from $A$. 

In this case we divide the circuit (its union gates) based on their subinterval, and we calculate the contribution of each 
part separately. To do that we have to limit the amount of reuse of a given row-class within each part, and also among 
distinct parts. Within each part we limit the amount of reuse using a similar technique to Lemma~\ref{l-simple} based on
unhelpfulness of the graph (Lemma~\ref{l-unhelpfulcost}). 
However, for distinct parts we need a different tool which we call {\em limited reuse}. 
Limited reuse is somewhat different than unhelpfulness in the type of guarantee we get. It is a weaker guarantee
as we are not able to limit the reuse of a row-class for each single gate but only the total reuse of row-classes
of all the gates in a particular part. On average the reuse is again roughly $O(\log n)$.

However, the number of gates in a particular part of the circuit might be considerably larger than the number of 
gates we are able to charge for work in that part. In general, we are only able to charge gates that already
made some non-trivial progress in the computation (as otherwise the gates could be reused heavily.) 
We overcome this obstacle by balancing the size of the part against the number of {\em chargeable} gates in that part.

If the total number of gates in the part is at least $n^{1/3}$-times larger than the total number of chargeable gates, we charge the part 
for its size. Otherwise we charge it for work. Each chargeable gates contributes by about $n^{2/3}$ units of work or more,
however this can be reused almost $n^{1/3}$-times elsewhere. Either way, approximately $\Omega(n^{7/3})$ of work must be done in total.
Now we present the actual proof.

In order to prove the theorem we need few more definitions.
Let $G_n$ and $\G_n$ and $\Pm,\Q,\Pt$ be as in the Section \ref{sec-rtgraphs}.
All witness circuits in this section are with respect to $\Pt \times \Q$ (i.e., $\G_n$). Let $c_0$ and $c_1$ be some constants that we will fix later.

The following definition aims to separate contribution from different rows within a particular subcircuit.
A witness circuit may benefit from taking a union of the same row of $\Q$ multiple times to obtain a particular union.
This could help various gates to attain the same row-class. In order to analyze the cost of the witness we want
to effectively prune the circuit so that contribution from each row of $\Q$ is counted at most once. The following definition
captures this prunning.

Let $W$ be a union circuit over $B$ with a single vertex $g_\out$ of out-degree zero ({\em output gate}). The {\em trimming} of $W$ is
a map that associates to each gate $g$ of $W$ a subset $\wtrim(g)\subseteq \wset(g)$ such that $\wtrim(g_\out)=\wset(g_\out)$ and 
for each non-input gate $g$, $\wtrim(g) = \wtrim(\wleft(g)) \dot\cup \wtrim(\wright(g))$. For each circuit $W$, 
we fix a canonical trimming that is obtained from $\wset(\cdot)$ by the following process:
For each $b\in \wset(g_\out)$, find the left-most path from $g_\out$ to an input gate $g$ such that $b\in\wset(g)$, and
remove $b$ from $\wset(g')$ of every gate $g'$ that is {\em not} on this path.  

Given the trimming of a union circuit $W$ we will focus our attention only on gates that contribute substantially to the cost of the computation. We
call such gates {\em chargeable} in the next definition.
For a vertex $a \in A$ and a subinterval $K\subseteq C$, let $W$ be a union witness for $(a,K)$ with its trimming.
We say a gate $g$ in $W$ is {\em $(a,K)$-chargeable} if $|\wtrim(g) \cap \beta'_{a,\G}(K) | \ge c_0 \log n$
and $\wtrim(\wleft(g)) \cap \beta'_{a,\G}(K)$ and $\wtrim(\wright(g)) \cap \beta'_{a,\G}(K)$ are both different from $\wtrim(g) \cap \beta'_{a,\G}(K)$. {\em $(a,K)$-Chargeable descendants of $g$} 
are $(a,K)$-chargeable gates $g'$ in $W$ where $\wtrim(g') \cap \beta'_{a,\G}(K) \subseteq \wtrim(g) \cap \beta'_{a,\G}(K)$. 
Observe that the number of $(a,K)$-chargeable descendants of a gate $g$ is at most $|\wtrim(g) \cap \beta'_{a,\G}(K) | + 1 - c_0 \log n$.

From a correct witness for $\Pt \times \Q$, we extract some induced union circuit $W$ for $(a,K)$ and some resultant circuit $W'$.
We say that a gate $g$ from $W$ {\em is compatible} with a gate $g'$ from $W'$ if $\wrow(\Q_{\wset(g)})\rest_K = \wrow(g')$.
% When gates are compatible we cannot charge them both for their union. Our goal will be to bound the (average) number of compatible gates.

We want to argue that chargeable gates corresponding to gates of a given correct witness have many different row-classes. Hence, we want to bound the number of gates
whose result is compatible with each other. This is akin to the notion of helpfulness.
In the case of helpfulness we were able to limit the repetition of the same row-class for individual gates operating on the same subinterval
of columns of $\Q$. In addition to that we need to limit the occurence of the same row-class for gates that operate on distinct subintervals.
As opposed to the simpler case of helpfulness, we will need to focus on the global count of row-classes that can be reused elsewhere
from gates operating on the same subinterval. 
The next definition encapsulates the desired property of $\G$.
% captures the notion that $\G$ forces any of its witnesses to perform many distinct unions.

For $a,a'\in A$ and subintervals $K,K'$ of $C$, we say that $(a,K)$ and $(a',K')$ are {\em independent} if either $a\neq a'$ or $K \cap K' = \emptyset$.
A resultant circuit $W'$ over $\{0,1\}^{\ell}$ is consistent with $\Q$, if there exists a subinterval $K\subseteq C$ of size $\ell$, 
such that for each input gate $g$ of $W'$, $\wrow(g)=\Q_b\rest_K$ for some $b\in B$. 
We say that $\G$ {\em admits only limited reuse} if for any resultant circuit $W'$ of size at most $n^3$ which is consistent with $\Q$ 
and any correct witness circuit $W$ for $\Pt \times \Q$, the number of gates in any induced union witnesses $W_1,\dots, W_s$ for any pairwise independent
pairs $(a_1,K_1),\dots,(a_s,K_s)$ that are chargeable and compatible with some gate in $W'$ is at most $c_1 |W'| \log n$.

We will show that with high probability $\G$ admits only limited reuse.

\begin{lemma}\label{l-reuse}
Let $c_1 \ge 7$ and $c_0 \ge 20$ be constants. 
Let $n$ be a large enough integer.
Let $\G_n$ be the graph from Section \ref{sec-rtgraphs}, and $\Pt,\Q$ be its corresponding matrices. 
The probability that $\G$ admits only limited reuse is at least $1-1/n$.
\end{lemma}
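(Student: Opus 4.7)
\medskip
\noindent\textit{Proof plan.} My plan is to establish the lemma by a union bound over all resultant circuits $W'$ of size at most $n^3$ that are consistent with $\Q$; there are at most $2^{O(n^3 \log n)}$ such $W'$. Because a chargeable-compatible gate $g$ in an induced union witness $W_j$ for $(a_j, K_j)$ is characterized for our purposes by the tuple $(\wset(g), \wtrim(g), a_j, K_j)$, I will dispense with the ambient correct witness $W$ and enumerate directly over such tuples, bounding the probability over $\G$ that the number satisfying (i) \emph{realizability} $\wset(g) \subseteq \Gamma_{B, \G}(a_j)$ (the minimum requirement for $g$ to appear in any valid induced union witness), (ii) $(a_j, K_j)$-chargeability, and (iii) compatibility with some $g' \in W'$, exceeds $c_1 |W'| \log n$.

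The key structural observation is that compatibility with $g' \in W'$ having row $u = \wrow(g')$ rigidifies $\wset(g) \cap \beta_{a_j}(K^*)$: it must equal exactly $\{b^{a_j}_c : c \in K^*[a_j]_G,\ u_c = 1\}$, where $b^a_c$ denotes the unique middle vertex in $\Gamma_{B, G}(a)$ adjacent to $c \in K^*[a]_G$; compatibility likewise forces $\wset(g)$ to avoid any middle vertex in $\Gamma_{B, G}(a_j)$ connected in $G$ to some column $c \in K^*$ with $u_c = 0$. Chargeability requires $|\wtrim(g) \cap \beta'_{a_j, \G}(K_j)| \ge c_0 \log n$, so in the aligned case $K_j = K^*$ at least $c_0 \log n$ of the forced middle vertices must lie in $\Gamma_{B, \G}(a_j)$. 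For each feasible $(g', a)$ pair, I would bound the expected number of realized chargeable-compatible tuples via the realization probability $(1/2)^{|\wset(g)|}$: the rigid hard-core contributes a factor of $n^{-c_0}$, a telescoping summation over the ``free'' portion of $\wset(g)$ (further constrained by the $u_c = 0$ conditions) keeps the sum bounded, and the chargeable split structure on $\wtrim(g)$ contributes only a logarithmic factor. Summing over $g' \in W'$ and over feasible $a$ yields expected count $O(|W'| \log n)$.

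The main obstacle is converting this expectation into a tail bound $\exp(-\Omega(|W'| \log n))$ on the count, which against the $2^{O(|W'| \log n)}$ possible $W'$ of size $|W'|$ is what the union bound demands. This is delicate because the realization events across distinct tuples are not mutually independent: they share edge-randomness through $\Gamma_{B, \G}(a)$ for common $a$. I would address this by a Chernoff-type argument stratified by $a$, exploiting that for fixed $a$ the relevant events involve only the independent bits $(a, b) \in \G$ and that each realized tuple consumes $\Omega(\log n)$ such bits via the forced hard-core. A secondary technical obstacle is the misaligned case $K_j \ne K^*$, where chargeability and compatibility constrain disjoint columns; this I would handle by partitioning the count by the overlap $K_j \cap K^*$ and combining compatibility on $K^*$ with an unhelpfulness-style argument (in the spirit of Lemma~\ref{l-unhelpful}) on $K_j \setminus K^*$. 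The cumulative bookkeeping combining these strands is where the bulk of the technical work lies, consistent with the paper's description of this lemma as technically complicated.
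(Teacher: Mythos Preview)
Your plan has a genuine gap in the central counting step, and a secondary confusion that creates a spurious obstacle.

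\textbf{The spurious obstacle.} Compatibility of a gate $g$ in an induced union witness for $(a_j,K_j)$ with $g'\in W'$ reads $\wrow(\Q_{\wset(g)})\rest_{K_j}=\wrow(g')$; it is defined relative to $K_j$, not to the interval $K^*$ that witnesses consistency of $W'$ with $\Q$. Thus the rigidification you describe pins down $\wset(g)\cap\beta_{a_j}(K_j)$, not $\wset(g)\cap\beta_{a_j}(K^*)$. Once this is corrected, there is no ``misaligned case'' to handle at all: the interval $K^*$ enters only in counting how many $W'$ there are, never in the compatibility analysis.

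\textbf{The real gap.} Your first-moment bound does not hold. Fix $(a,K)$ and $g'\in W'$ with $u=\wrow(g')$. Compatibility rigidifies $\wset(g)\cap\beta_a(K)$ to a deterministic set $Z$, but the remainder $F=\wset(g)\setminus\beta_a(K)$ ranges over all subsets of some $B_1\subseteq\Gamma_{B,G}(a)$ (the $u_c=0$ constraints only shrink the ground set $B_1$, and when $u$ has few zeros $B_1$ can have size $\Theta(n)$). Summing the realization probability $(1/2)^{|Z|+|F|}$ over all such $F$ gives $(1/2)^{|Z|}\cdot(3/2)^{|B_1|}$, which is exponential in $n$; enumerating $\wtrim(g)\subseteq\wset(g)$ on top of this only makes things worse. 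So the expected number of realized tuples is not $O(|W'|\log n)$, and no concentration argument can recover the required tail bound from here. Dispensing with the ambient witness $W$ is exactly what breaks the argument: without it you are counting all conceivable $\wset(g)$, not the ones that actually occur.

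\textbf{What the paper does instead.} The paper never enumerates over $\wset(g)$. For each fixed $(a,K)$ it proves (Lemma~\ref{l-randomization}) that the event ``some union witness for $(a,K)$ contains $\ge m$ chargeable gates compatible with $W'$'' is contained in an event $E'_m$ of probability at most $2^{-m-(c_0-5)\log n}$, depending only on edges between $a$ and $\beta_a(K)$. The reduction exploits the laminar structure of the canonical trimming: among the $m$ chargeable compatible gates one picks those maximal in $\wtrim'(\cdot)$; their $\wtrim'$ sets are disjoint and together have size $\ge m+s(c_0-1)\log n$; compatibility of each maximal $g_i$ with some $g'_{j}\in W'$ forces a \emph{deterministic} set $X=\bigcup_j\beta_a(K\rest_{\wrow(g'_j)})$ of that size to lie entirely in $\Gamma_{B,\G}(a)$. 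The outer enumeration is then only over $t$-tuples of gates of $W'$, which is polynomial in $|W'|$. Because the $E'_m$ for pairwise independent $(a_i,K_i)$ depend on disjoint edge sets, they are genuinely independent, so one multiplies the bounds and closes with a union bound over $W'$, the pairs $(a_i,K_i)$, and the profile $(m_i)$. The pairwise-independence hypothesis is thus used directly for product-form probability bounds, not merely for a stratified Chernoff heuristic.
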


To prove this lemma we will analyze individual pairs $(a,K)$ and their induced union circuits. 

\begin{lemma}\label{l-randomization}
Let $c_0 \ge 5$ be a constant. 
Let $1 \le m, \ell \le n$ be integers. 
Let $W'$ be arbitrary resultant circuit over $\{0,1\}^\ell$ with at most $n^3$ gates.
Let $a\in A$ and  $K$ be a subinterval of $C$ of size $\ell$. 
Let $E_m$ be the event that there is a union witness $W$ for $(a,K)$ in which at least $m$ $(a,K)$-chargeable
gates are compatible with gates in $W'$. There exists another event $E'_m$ that depends only on the presence or absence
of edges between $a$ and $\beta'_{a,\G}(K)$ in $\G$ such that $E_m$ implies $E'_m$, and the probability $\Pr[E_m]\le \Pr[E'_m] \le 2^{-m-(c_0-5) \log n}$. 
\end{lemma}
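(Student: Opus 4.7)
The plan is to reduce $E_m$ to a purely combinatorial event $E'_m$ about the random set $\beta'_{a,\G}(K)$, and then bound $\Pr[E'_m]$ via a union bound that exploits the laminar structure of restricted trims. Let $B_0 := \beta_{a,G}(K) \subseteq B$, which is determined by $G$, $a$, $K$ alone. The edges $(a,b)$ for $b \in B_0$ are independent Bernoulli($1/2$) random variables, and $\beta'_{a,\G}(K) = \{b \in B_0 : (a,b) \in \G\}$. For each $g' \in W'$ I define the \emph{fixed} (non-random) set $T(g') := \{b \in B_0 : \wrow(g') \text{ has a } 1 \text{ at position } c_b\}$, where $c_b \in K$ is the unique column matched to $b$ via the $(r,t)$-structure of $G$. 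The key forcing observation is: for any $S \subseteq \Gamma_{B,\G}(a)$, the restriction of $\wrow(\Q_S)\rest_K$ to positions in $K[a]_G$ is precisely the indicator of $S \cap B_0$ (by the unique-matching property of the $(r,t)$-graph). Therefore, if a gate $g$ in a union witness with $\wset(g) = S$ is compatible with $g' \in W'$, then $S \cap B_0 = T(g')$, and hence $T(g') \subseteq \Gamma_{B,\G}(a) \cap B_0 = \beta'_{a,\G}(K)$.

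For each $(a,K)$-chargeable gate $g$ in a union witness $W$ compatible with some $g'_g \in W'$, set $T'_g := \wtrim(g) \cap \beta'_{a,\G}(K)$. Then $T'_g \subseteq T(g'_g)$, $|T'_g| \ge c_0 \log n$, and $T'_g \subseteq \beta'_{a,\G}(K)$. Over all chargeable gates of $W$, the sets $T'_g$ form a laminar family inherited from the partition property of $\wtrim$, and the chargeability condition (both children's restricted trims are proper subsets of the parent's) combined with this laminar structure forces distinct chargeable gates to have distinct $T'_g$s. I define $E'_m$ to be: there exist $m$ distinct subsets $U_1,\ldots,U_m$ of $B_0$ forming a laminar family, each of size $\ge c_0 \log n$, each $U_j \subseteq T(g'_j)$ for some $g'_j \in W'$, and $\bigcup_j U_j \subseteq \beta'_{a,\G}(K)$. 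This event depends only on the edges between $a$ and $B_0$, and the implication $E_m \Rightarrow E'_m$ follows immediately by taking $U_j = T'_{g_j}$.

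To bound $\Pr[E'_m]$, two facts come into play. First, any laminar family of $m$ distinct sets each of size $\ge c_0 \log n$ has union of size at least $c_0 \log n + m - 1$ (the extremal case is a nested chain; for bushier trees the union is proportionally larger). Hence for any fixed such family, $\Pr[\bigcup_j U_j \subseteq \beta'_{a,\G}(K)] \le 2^{-(c_0 \log n + m - 1)}$. Second, any valid configuration can be encoded by the tree structure of the laminar family, an assignment of labels $g'_j \in W'$ to the $m$ nodes, and element-level data constrained by the tree and by the containments $U_j \subseteq T(g'_j)$. A careful union bound over these three components, balanced against the probability per configuration, is designed to yield the stated $\Pr[E'_m] \le 2^{-m - (c_0-5)\log n}$.

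The main obstacle is precisely this counting step. A naive accounting contributes a factor of $|W'|^m \le n^{3m}$ from the label assignment alone, which overshoots the available slack of $n^5$ (coming from the target bound) by a factor of $n^{3m-5}$. The actual argument must therefore amortise the cost of each additional chargeable gate --- for instance, by processing the chargeable subtree bottom-up and charging each new node only for the ``new'' elements it contributes to $\bigcup_j U_j$ (bounded by $c_0 \log n$ at leaves and strictly less at interior chargeable nodes whose one child is itself chargeable), while simultaneously limiting the label choices via the laminar containment constraints so that the number of viable $g'_j$ at each node is proportional to the already-paid-for element budget rather than to $|W'|$. Making this amortisation tight uniformly for all $c_0 \ge 5$ is the technically delicate core of the proof.
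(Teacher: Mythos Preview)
Your reduction of $E_m$ to an event about the random set $\beta'_{a,\G}(K)$ and your observation that compatibility with $g'\in W'$ forces $\wset(g)\cap B_0 = T(g')$ are both correct and match the paper's setup. The gap is exactly where you say it is, and your proposed fix does not work.

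The event $E'_m$ you define is parameterized by \emph{all} $m$ chargeable gates, each requiring a label $g'_j\in W'$. Your suggestion to ``limit the label choices via the laminar containment constraints'' cannot succeed: given a fixed set $U_j\subseteq B_0$, the number of gates $g'\in W'$ with $T(g')\supseteq U_j$ is not controlled by anything in the setup---it could be all of $W'$. So the $|W'|^m$ factor is genuinely there, and the probability saving $2^{-(c_0\log n + m-1)}$ from $|\bigcup U_j|\ge c_0\log n+m-1$ is far too weak to kill it (you would need roughly $2^{-3m\log n}$).

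The paper's key idea, which your sketch misses, is to parameterize $E'_m$ only by the \emph{maximal} chargeable gates. Let $g_1,\dots,g_s$ be the gates whose restricted trims $\wtrim'(g_i)=\wtrim(g_i)\cap\beta'_{a,\G}(K)$ are maximal under inclusion; these are pairwise disjoint. The crucial combinatorial fact---already recorded in the paper just before the lemma---is that the number of $(a,K)$-chargeable descendants of any gate $g$ is at most $|\wtrim'(g)|+1-c_0\log n$. Summing over the maximal gates gives
\[
m \;\le\; \sum_{i=1}^{s}\bigl(|\wtrim'(g_i)|+1-c_0\log n\bigr)
\;=\;\Bigl|\bigcup_i \wtrim'(g_i)\Bigr| \;-\; s(c_0-1)\log n .
\]
Now pick $t\le s$ gates $g'_1,\dots,g'_t$ in $W'$ covering the compatibilities of $g_1,\dots,g_s$, and set $X=\bigcup_j T(g'_j)$. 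Then $\bigcup_i\wtrim'(g_i)\subseteq X\subseteq \beta'_{a,\G}(K)$, so $|X|\ge m+t(c_0-1)\log n$ and all edges from $a$ to $X$ are present. Defining $E'_m$ as the existence of such a $t$-tuple, the union bound is now
\[
\sum_{t\ge 1}|W'|^{t}\,2^{-m-t(c_0-1)\log n}
\;\le\; 2^{-m}\sum_{t\ge 1} n^{(3-(c_0-1))t}
\;\le\; 2\cdot 2^{-m-(c_0-4)\log n},
\]
since $c_0\ge 5$ makes each label cost $n^3$ but buy $n^{-(c_0-1)}\le n^{-4}$. This is the amortization you were looking for: the right unit to pay for is a maximal gate, not an arbitrary chargeable gate.
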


For independent pairs $(a,K)$ and $(a',K')$, the events $E'_m$ from Lemma~\ref{l-randomization} are independent so we will be able to bound the probability of them occuring simultaneously.

\begin{proof}
We claim that if $E_m$ occurs then there must be a $t$-tuple of gates $g'_1,\dots,g'_t$ in $W'$, where $t \le n^3$, such that the set:
$$X = \bigcup_{j=1}^t \beta'_{a,\G}(K \rest_{\wrow(g'_j)})$$ satisfies
\begin{enumerate}
\item $|X| \ge m + t (c_0-1) \log n$, and
\item edges between $a$ and $X$ are all present in $\G$.
\end{enumerate}
The existence of such a $t$-triple is our event $E'_m$. $E'_m$ has probability at most 
$$ \sum_{t=1}^{n^3} |W'|^t \cdot 2^{- (m + t (c_0-1) \log n) } \le 2^{-m} \cdot \sum_{t=1}^{n^3} 2^{- t (c_0-4) \log n) } \le 2 \cdot 2^{-m-(c_0-4) \log n},$$
as there are $|W'|^t$ choices for the $t$-tuple $g'_1,\dots,g'_t$, and the probability that all edges between $a$ and $X$ are present in  $\G$
is $2^{-|X|}$. The lemma follows in such a case as $E'_m$ only depends on the presence or absence of edges between $a$ and $\beta'_{a,\G}(K)$ in $\G$. 
So we only need to prove the existence of the $t$-tuple of required properties whenever $E_m$ occurs.

Let $S$ be a set of $m$ $(a,K)$-chargeable gates in $W$ which are compatible with some gate in $W'$. 
For each gate $g\in S$, let $\wtrim'(g) = \wtrim(g) \cap \beta'_{a,\G}(K)$. Let $g_1,\dots,g_s$
be all the gates in $S$ that are maximal with respect to inclusion of their sets $\wtrim'(g_i)$. All the gates in $S$ are among the chargable
descendants of $g_1,\dots,g_s$. Observe:
\begin{enumerate}
\item For any $i\neq j \in [s]$, $\wtrim'(g_i) \cap \wtrim'(g_j) = \emptyset$, and
\item for any  $i \in [s]$, the number of $(a,K)$-chargeable descendants of $g_i$ is at most $|\wtrim'(g_i)| + 1 - c_0 \log n$.
\end{enumerate}
The first item holds as $\wtrim(g_i)$ are either related by inclusion or disjoint, the second item holds by the definition of $(a,K)$-chargeable gates.
This implies:
$$|S| = m \le \left( \sum_{i=1}^s | \wtrim'(g_i) | \right) + s - s c_0 \log n.$$

Pick the smallest set of gates $g'_1,\dots,g'_t$ in $W'$ so that each of the gates $g_1,\dots,g_s$ is compatible with at least one of them. Clearly, $t \le s$.
Let $g_\out$ be the top-most gate of $W$. By definition, $\wset(g_\out) = \Gamma_{B,\G}(a)$.
If $g_i$ is compatible with $g'_j$ then $\wrow(\Q_{\wset(g_i)})\rest_K = \wrow(g'_j)$. Hence, $\beta'_{a,\G}(K \rest_{\wrow(g'_j)}) \subseteq \wset(g_i) \subseteq \wset(g_\out)$, and
$\wtrim'(g_i) \subseteq \beta'_{a,\G}(K \rest_{\wrow(g'_j)})$ by properties of vertices unique for $a$. For the set $X=\bigcup_{j=1}^t \beta'_{a,\G}(K \rest_{\wrow(g'_j)})$,
the former implies that all edges between $a$ and $X$ must be present in $\G$. The latter implies $|X| \ge \sum_{i=1}^s | \wtrim'(g_i) | \ge m + s (c_0-1) \log n
\ge m+t (c_0-1) \log n$. Hence, $g'_1,\dots,g'_t$ is a tuple of required properties and the lemma follows.
\end{proof}

\begin{proofof}{Lemma \ref{l-reuse}}
Fix arbitrary resultant circuit $W'$ of size at most $n^3$ consistent with $\Q$.
Fix $s \in [n^3]$ and pairwise independent $(a_1,K_1),(a_2,K_2),\dots,(a_s,K_s)$, where each $a_i \in A$ and $K_i$ is a subinterval of $C$.
Fix a sequence of positive integers $m_1,m_2,\dots,m_s$ such that $\sum_{i\in[s]} m_i \ge c_1 |W_1| \log n$. 

Take $\G$ at random. Let $W$ be some correct witness for $\Pt \times \Q$ which for each $i\in[s]$, contains an induced union witness $W_i$ for $(a_i,K_i)$
such that $W_i$ contains at least $m_i$ $(a_i,K_i)$-chargeable gates compatible with gates in $W'$. $W$ might not exist.
Our goal is to estimate the probability that such a union witness $W$ exists.

Let $E_i$ be the event that there is some union witness $W_i$ for $(a_i,K_i)$ which contains at least $m_i$ $(a_i,K_i)$-chargeable gates compatible with gates in $W'$.
We can associate to $E_i$ also an event $E'_i$ from Lemma~\ref{l-randomization}. Since  $(a_1,K_1),(a_2,K_2),\dots,(a_s,K_s)$ are pairwise independent,
the events $E'_i$ are mutually independent. Thus we can bound the probability of the existence of $W$ by

\begin{eqnarray*}
\Pr[W_1,W_2,\dots,W_s \;\mathrm{ exists}] &=& \Pr[E_1 \cap E_2 \dots \cap E_s]\\ &\le& 
\Pr[E'_1 \cap E'_2 \dots \cap E'_s] \\ &=& \prod_{i\in [s]} \Pr[E'_i] \\ &\le& 
\prod_{i\in [s]} 2^{-m_i-(c_0-5) \log n} \\ &\le& 2^{-c_1 |W'| \log n - s(c_0-5) \log n}
\end{eqnarray*}
where the second equality follows from the independence and the second inequality follows from Lemma~\ref{l-randomization}.

This probability is for a fixed choice of $W$, $s$, $a_i$'s, $K_i$'s, and $m_i$'s.
For a given size $t=|W'|$ there are at most $(t^2 + n)^t n^2$ choices for $W'$ consistent with $\Q$. 
There are also at most $(n^3)^s$ choices for $(a_1,K_1),\dots,(a_s,K_s)$
and at most $(c_1 n^3 \log n)^s$ choices for $m_1,\dots,m_s$.

Thus the probability that $\G$ does not admit only limited reuse is at most:
\begin{eqnarray*}
\sum_{t=1}^{n^3} \sum_{s=1}^{n^3} n^{3s+2} (t^2 + n)^t \cdot (c_1 n^3 \log n)^s \cdot  2^{-c_1 |W'| \log n - s(c_0-5) \log n} \le 1/n.
\end{eqnarray*}

\end{proofof}

\subsection{The cost of chargeable gates in a partition}
% {Unhelpfulness with partitions}

For $\Pt,\Q$ from Section \ref{sec-rtgraphs}, let $W$ be a correct witness for $\Pt \times Q$.
We say that a gate $g$ in $W$ is $(a,K)$-chargeable if $g$ corresponds to an $(a,K)$-chargeable
gate in the lexicographically first induced  union witness for $(a,K)$ in $W$.

The next lemma lower bounds the contribution of chargeable gates to the total cost of the witness.
It is similar in spirit to Lemma \ref{l-simple} and its proof is similar. It focuses on union gates dealing with a particular
subinterval $K\subseteq C$.

\begin{lemma}[Partition version]\label{l-unhelpfulcost}
Let $n$ be a large enough integer and $\G_n$ be the graph from Section \ref{sec-rtgraphs}, and $\Pt,\Q$ be its corresponding matrices. 
Let $r,k >1$ be integers and $\ell = c_0 \log n$.
Let $W$ be a correct witness for $\Pt \times Q$. 
Let $K\subseteq C$ be a subinterval.  
Let $R \subseteq B$ be such that for each $b$ in $R$, $\Q_b\rest_K$ has at least $r$ ones.
Let $A' \subseteq A$ be such that for each $a\in A'$, $| R \cap \beta'_{a,\G}(K)| \ge 2\ell$.
Let $m=\sum_{a\in A'} | R \cap \beta'_{a,\G}(K)|$.
If $\G$ is $(k,\ell)$-unhelpful on $K$ then there is a set $D$ of union gates in $W$ such that
\begin{enumerate}
\item Each gate in $D$ is $(a,K)$-chargeable for some vertex $a\in A$, and
\item The number of different row-classes of gates in $D$ of cost $\ge r$ is at least $m / 4k\ell$.
\end{enumerate}
\end{lemma}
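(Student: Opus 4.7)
The plan is to mirror the proof of Lemma \ref{l-simple}, localizing to each $a \in A'$, to the interval $K$, and to the set $R_a := R \cap \beta'_{a,\G}(K)$ (so $|R_a| \ge 2\ell$ by hypothesis). Fix the lexicographically first induced union witness $W_a$ for $(a,K)$ in $W$, with its canonical trimming. The essential property of $R_a$ is that for every $b \in R_a$ there is a unique $c_b \in K \cap C[a]_G$ with $(b,c_b) \in G$, and $b$ is the sole neighbor of $a$ in $G$ connecting $a$ to $c_b$; this lets us convert distinctness of $R_a$-memberships into distinctness of row-classes via the uniqueness property of the underlying $(r,t)$-graph.

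Inside $W_a$ take $D_a$ to be a maximal set of gates $g$ satisfying (i) $|\wtrim(g) \cap R_a| \ge \ell$, (ii) at least one child has $|\wtrim \cap R_a| < \ell$, with pairwise distinct unordered triples $\{\wset(g) \cap R_a,\; \wset(\wleft(g)) \cap R_a,\; \wset(\wright(g)) \cap R_a\}$ across $D_a$. Running the canonical-trim path argument from Lemma \ref{l-simple} on $\wtrim \cap R_a$ (which decomposes disjointly across siblings) yields $|D_a| \ge \lfloor |R_a|/(2\ell) \rfloor \ge |R_a|/(4\ell)$: for each $b \in R_a$, $|\wtrim \cap R_a|$ along the canonical path drops from $|R_a|$ at the top to $1$ at the input, so some transition falls below $\ell$, and maximality places its parent's triple in $D_a$. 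Because $\wtrim$ is disjoint across siblings, (i) and (ii) force the sibling of any small child to have $\wtrim \cap R_a \ne \emptyset$, so \emph{both} children of $g \in D_a$ have $\wtrim \cap R_a$ nonempty. This single fact gives simultaneously (a) $(a,K)$-chargeability, since $|\wtrim(g) \cap \beta'_{a,\G}(K)| \ge |\wtrim(g)\cap R_a| \ge \ell$ and both children's $\wtrim \cap \beta'_{a,\G}(K)$ are nonempty strict subsets of $g$'s; and (b) cost $\ge r$, since $\wset \cap R \supseteq \wtrim \cap R_a \ne \emptyset$ on each side picks some $b \in R$, so $\Q_b\rest_K$ contributes $\ge r$ ones to each of the three vectors in the row-class of $g$.

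To see distinct row-classes within $D_a$, note that by maximality the $\wset \cap R_a$-triples are pairwise distinct, so some $b \in R_a$ witnesses the difference; the uniqueness of $c_b \in K$ as the only $a$-$c_b$ connector in $G$ through $b$ then forces the corresponding $\wrow(\Q_\cdot)\rest_K$-triples to disagree at position $c_b$. Summing gives at least $m/(4\ell)$ distinct row-classes of cost $\ge r$ across $D := \bigcup_{a \in A'} D_a$. Finally, $(k,\ell)$-unhelpfulness of $\G$ on $K$ caps at $k$ the number of $a$'s whose $D_a$ can host gates of any one given row-class: were $g_1 \in D_{a_1},\ldots,g_{k+1}\in D_{a_{k+1}}$ to share a row-class, then taking the smallest $\wset(g_i)$ as $S$ makes it (via the same uniqueness-of-$c_b$ translation) helpful on $K$ for all $k+1$ of the $a_i$'s, contradicting unhelpfulness as in Lemma \ref{l-simple}. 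Dividing by $k$ yields at least $m/(4k\ell)$ distinct row-classes of cost $\ge r$, as required.

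The main technical obstacle is that the three properties we need are phrased in terms of three different set-intersections: $(a,K)$-chargeability uses $\wtrim \cap \beta'_{a,\G}(K)$, the cost lower bound needs $\wset \cap R \ne \emptyset$ on both inputs, and distinctness of row-classes is naturally read from $\wset \cap \beta'_{a,\G}(K)$. The unifying device is to run the entire argument through $\wtrim \cap R_a$: the inclusions $R_a \subseteq R$ and $R_a \subseteq \beta'_{a,\G}(K)$ propagate disjointness-of-$\wtrim$-at-siblings into nonemptiness of $\wset \cap R$ on both inputs (giving the cost bound) and of $\wtrim \cap \beta'_{a,\G}(K)$ on both inputs (giving chargeability), while the uniqueness of each $c_b$ for $b \in R_a$ converts $\wset \cap R_a$-differences into $\wrow\rest_K$-differences (giving distinct row-classes and letting the unhelpfulness argument go through).
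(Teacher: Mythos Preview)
Your overall strategy mirrors the paper's: for each $a\in A'$ take the induced union witness $W_a$ with its canonical trimming, extract a set $D_a$ of ``threshold'' gates, lower-bound $|D_a|$ by a covering argument, check chargeability and cost, and then collapse across $a$'s using $(k,\ell)$-unhelpfulness. The substantive difference is that the paper imposes the distinctness condition on the triples $\{\wtrim'(g),\wtrim'(\wleft(g)),\wtrim'(\wright(g))\}$ where $\wtrim'(\cdot)=\wtrim(\cdot)\cap R_a$, whereas you impose it on the triples $\{\wset(g)\cap R_a,\wset(\wleft(g))\cap R_a,\wset(\wright(g))\cap R_a\}$. This choice does make your row-class distinctness step cleaner, but it breaks the counting step.

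\medskip
\noindent\textbf{Main gap: the bound $|D_a|\ge\lfloor |R_a|/(2\ell)\rfloor$ does not follow.}
The path argument finds, for each $b\in R_a$, a gate $g_i$ with $|\wtrim(g_i)\cap R_a|\ge\ell>|\wtrim(g_{i-1})\cap R_a|$ and $b\in\wtrim(g_{i-1})\cap R_a$. Maximality of $D_a$ then yields some $g'\in D_a$ sharing the \emph{$\wset\cap R_a$-triple} of $g_i$. But to run the covering bound you need $b$ to land in a set of size $<\ell$ attached to $g'$, and the sets whose sizes you control are the $\wtrim\cap R_a$ of the children of $g'$, not their $\wset\cap R_a$. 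Equality of $\wset\cap R_a$-triples tells you nothing about the $\wtrim\cap R_a$-sizes of $g'$'s children (trimmings of two gates with identical $\wset$-triples can be entirely different), so you cannot conclude that $b$ is covered by a set of size $<\ell$ associated with $g'$. In Lemma~\ref{l-simple} this works only because the size threshold and the distinctness condition are phrased in terms of the \emph{same} function ($\wset$); the paper's proof of the present lemma preserves that match by using $\wtrim'$ for both.

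\medskip
\noindent\textbf{Secondary gap: ``both children have $\wtrim\cap R_a$ nonempty''.}
From (i) and (ii) you deduce only that the \emph{sibling} of a small child has $\wtrim\cap R_a\neq\emptyset$; the small child itself can have $\wtrim\cap R_a=\emptyset$ (then the sibling carries all of $\wtrim(g)\cap R_a$). In that case your cost argument ``$\wset\cap R\supseteq\wtrim\cap R_a\neq\emptyset$ on each side'' fails for the small side, and chargeability (which needs both children's $\wtrim\cap\beta'_{a,\G}(K)$ to be \emph{proper} subsets) is no longer guaranteed by your route. The paper closes this by explicitly requiring $\wtrim'(\wleft(g)),\wtrim'(\wright(g))\subsetneq\wtrim'(g)$ in the definition of $D_a$; together with the disjoint-union property of trimming this forces both children's $\wtrim'$ to be nonempty, and the path argument still applies because the transition gate $g_i$ found for each $b$ automatically satisfies this strictness.
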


\begin{proof}
Pick $a \in A'$ for which there is an induced union witness in $W$. Fix the lexicographically first union witness $W_a$ for $(a,K)$.
Let $\wtrim(\cdot)$ be its trimming. Define $\wtrim'(g)=\wtrim(g)\cap R \cap \beta'_{a,\G}(K)$.
For the output gate $g_a$ of $W_a$, $\wtrim'(g_a) = R \cap \beta'_{a,\G}(K)$ as $\beta'_{a,\G}(K) \subseteq \Gamma_{B,\G}(a) = \wtrim(g_a)$.
Take a maximal set $D_a$ of gates from $W_a$ such that for each $g\in D_a$, $|\wtrim'(g)| \ge \ell$, 
$\wtrim'(\wleft(g)), \wtrim'(\wright(g)) \subsetneq \wtrim'(g)$  and either $|\wtrim'(\wleft(g))| < \ell$
or  $|\wtrim'(\wright(g))| < \ell$, and furthermore for $g\neq g' \in D_a$, 
$\{\wtrim'(g),\wtrim'(\wleft(g)),\wtrim'(\wright(g))\} \neq \{\wtrim'(g'),\wtrim'(\wleft(g')),\wtrim'(\wright(g'))\}$.
Clearly, gates in $D_a$ are $(a,K)$-chargeable.

Notice, if $g\neq g' \in D_a$ then $\class(g) \neq \class(g')$. (Here we identify $g$ with its corresponding gate in $W$.) 
This is because for any sets $S \neq S' \subseteq \wtrim'(g_a)$,
$\wrow(\Q_S)\rest_K \neq \wrow(\Q_{S'})\rest_K$. 
(Say, $b\in S \setminus S'$, then there is 1 in $\Q_b\rest_K$ which corresponds to a vertex $c$ unique for $a$.
Thus, $\wrow(\Q_S)_c = 1$ whereas $\wrow(\Q_{S'})_c = 0$.) Also, if $g \in D_a$ and $\{u,v,z\}$ is its row-class then $|u|,|v|,|z| \ge r$,
since $\wtrim'(g), \wtrim'(\wleft(g)), \wtrim'(\wright(g))$ have non-empty intersection with $R$.

We claim that since $D_a$ is maximal, $|D_a| \ge \lfloor |\wtrim'(g_a)| / 2\ell \rfloor$. 
We prove the claim. Assume $\wtrim'(g_a) \ge 2\ell$ otherwise there is nothing to prove.
Take any $b \in \wtrim'(g_a)$ and consider a path $g_0,g_1,\dots,g_p=g_a$ of gates in $W_a$ such that $\wtrim'(g_0) = \{ b \}$.
Since $|\wtrim'(g_0)|=1$, $|\wtrim'(g_a)|\ge 2\ell$ and $\wtrim'(g_{i-1}) \subseteq \wtrim'(g_i)$, there is some $g_i$ with 
$|\wtrim'(g_i)|\ge \ell$ and $|\wtrim'(g_{i-1})| < \ell$. 
Say $g_{i-1}=\wleft(g_{i})$. 
Since $b \in \wtrim'(\wleft(g_{i})) \subsetneq  \wtrim'(g_i)$ and $\wtrim'(\wleft(g_i)) \dot\cup \wtrim'(\wright(g_i)) = \wtrim'(g_i)$, 
$\wtrim'(\wright(g_i)) \neq  \wtrim'(g_i)$.
By maximality of $D_a$ there is some gate $g \in D_a$ such that
$\{\wtrim'(g),\wtrim'(\wleft(g)),\wtrim'(\wright(g))\} = \{\wtrim'(g_i),\wtrim'(\wleft(g_i)),\wtrim'(\wright(g_i))\}$.
Hence, $b$ is in $\wtrim'(\wleft(g))$ or $\wtrim'(\wright(g))$ of size $<\ell$.
Thus
$$
\wtrim'(g_a) \subseteq \bigcup_{g \in D_a;\; |\wtrim'(\wleft(g))| < \ell} \wtrim'(\wleft(g)) \cup
                    \bigcup_{g \in D_a;\; |\wtrim'(\wright(g))| < \ell} \wtrim'(\wright(g))
$$
Hence, $|\wtrim'(g_a)| \le 2\ell \cdot |D_a|$ and the claim follows.

Set $D=\bigcup_a D_a$.
For a given $a$, gates in $D_a$ have different row-classes. Each of the row-classes is of cost at least $r$. Indeed, for each $g\in D_a$, 
$\wtrim'(g),\wtrim'(\wleft(g))$ and $\wtrim'(\wright(g))$ are all non-empty, so each of the $\wset(g),\wset(\wleft(g))$ and $\wset(\wright(g))$ contains
some $b$ with $|\Q_b\rest_K| \ge r$.

Since $\G$ is $(k,\ell)$-unhelpful on $K$, the same row-class can appear 
in $D_a$ only for at most $k$ different $a$'s. 
(Say, there were $a_1,a_2,\dots,a_{k+1}$ in $A$ and gates $g_1\in D_{a_1}, \dots, g_{k+1}\in D_{a_{k+1}}$ of the same row-class.
For each $i\in[k+1]$, $\wtrim'(g_i) \subseteq \wset(g_i) \cap \beta'_{a_i,\G}(K)$ so
$|\wset(g_i) \cap \beta'_{a_i,\G}(K)| \ge \ell$. The smallest $\wset(g_i) \cap \beta'_{a_i,\G}(K)$ would be helpful for $a_1,a_2,\dots,a_{k+1}$ contradicting the unhelpfulness of $\G$.)

Since
$$
\sum_{a\in A'} |D_a| \ge \sum_{a\in A'} \lfloor |\wtrim'(g_a)| / 2\ell \rfloor \ge \frac{m}{4\ell},
$$
$D$ contains chargeable gates of at least $m/4k\ell$ different row-classes with cost $\ge r$.
\end{proof}

\subsection{Large number of partitions}

If the witness for $\Pt \times Q$ involves many subintervals for many vertices we will apply the next lemma.

Let $n$ be a large enough integer and $\G_n$ be the graph from Section \ref{sec-rtgraphs} with associated matrices $\Pt,\Q$.
Let $W$ be a witness for $\Pt \times \Q$. By Proposition \ref{p-unionw} each $a \in A$ is associated with 
distinct subintervals $K_{a,1},\dots,K_{a,\ell_a} \subseteq C$, for some $\ell_a$, such that $C=\bigcup_{j\in [\ell_a]} K_{a,j}$ 
and there are union gates $g_{a,1},\dots,g_{a,\ell_a}$  in $W$ such that $g_{a,j}$ 
outputs $(\Gamma_{B,\G}(a),K_{a,j},v_{a,j})$ for some $v_{a,j}\in\{0,1\}^{|K_{a,j}|}$.

\begin{lemma}\label{l-manyparts}
Let $W$, $\ell_a$'s, $K_{a,j}$'s, $g_{a,j}$'s be as above. 
Let $c,d\ge 4$ and
$\ell,r \ge 1$ be integers where $r$ is large enough. Let $L=\{a\in A, \ell_a \ge \ell\;\&\;|\Gamma_{B,\G}(a)| \ge r\}$.
If $\G$ is $(c \log n, d \log n)$-diverse then the size of $W$ is at least $r\ell \cdot |L| /(2cd \log^2 n)$. 
\end{lemma}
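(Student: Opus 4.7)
The plan is to combine the "maximal boundary set" trick from the proof of Lemma~\ref{l-simple} with the diversity hypothesis, now using diversity purely to bound how many pairs $(a,j)$ can share a single physical gate of $W$.

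First, I would fix a pair $(a,j)$ with $a\in L$ and $j\in[\ell]$, and consider the induced union witness $W_{a,j}$ for $(a,K_{a,j})$ rooted at $g_{a,j}$. Note that every gate in $W_{a,j}$ outputs a triple of the form $(S,K_{a,j},v)$ with $S\subseteq \Gamma_{B,\G}(a)$, and $\wset(g_{a,j})=\Gamma_{B,\G}(a)$ has size at least $r$. Repeating the maximal-set construction of Lemma~\ref{l-simple} inside $W_{a,j}$ with threshold $\ell'=d\log n$, I obtain a set $D_{a,j}$ of gates of $W$ such that each $g\in D_{a,j}$ has $|\wset(g)|\ge d\log n$, has $\wset(g)\subseteq \Gamma_{B,\G}(a)$, has output interval $K_{a,j}$, and (by the same telescoping covering argument) $|D_{a,j}|\ge \lfloor|\Gamma_{B,\G}(a)|/(2d\log n)\rfloor\ge r/(2d\log n)$ when $r$ is taken large enough to absorb the floor.

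Next, I would bound multiplicities. Fix any physical gate $g$ in $W$ with output $(S,K',v)$ and $|S|\ge d\log n$. If $g\in D_{a,j}$ for some pair $(a,j)$ with $a\in L$, then two constraints must hold: the interval part forces $K'=K_{a,j}$, and the set part forces $S\subseteq \Gamma_{B,\G}(a)$. Since the intervals $K_{a,1},\dots,K_{a,\ell_a}$ are distinct for a fixed $a$, there is at most one $j$ per $a$ compatible with $g$. Moreover, by the $(c\log n,d\log n)$-diversity of $\G$ applied to the set $S$ (which has size $\ge d\log n$), fewer than $c\log n$ vertices $a\in A$ satisfy $\Gamma_{B,\G}(a)\supseteq S$. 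Combining, $g$ lies in at most $c\log n$ of the sets $D_{a,j}$ ranging over $a\in L$, $j\in[\ell]$.

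Finally, I would just do the double-counting:
\[
\mbox{size}(W)\;\ge\;\bigl|\bigcup_{a\in L,\,j\in[\ell]} D_{a,j}\bigr|
\;\ge\;\frac{1}{c\log n}\sum_{a\in L,\,j\in[\ell]}|D_{a,j}|
\;\ge\;\frac{\ell\cdot|L|\cdot r/(2d\log n)}{c\log n}
\;=\;\frac{r\,\ell\,|L|}{2cd\log^2 n},
\]
which is exactly the stated bound. The main conceptual step is realizing that, unlike in Lemma~\ref{l-simple} where unhelpfulness was needed to avoid collisions of \emph{row-classes}, here we merely count \emph{physical gates}, so the weaker diversity hypothesis suffices; the potential obstacle is making sure that an induced union witness $W_{a,j}$ really exists inside $W$ for every $a\in L$ and every $j\in[\ell_a]$, which is guaranteed by Proposition~\ref{p-unionw} together with the definition of an induced union witness.
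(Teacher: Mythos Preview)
Your proposal is correct and follows essentially the same approach as the paper. Both arguments (i) produce, for each pair $(a,j)$ with $a\in L$, at least $r/(2d\log n)$ union gates of $W$ with interval $K_{a,j}$ and set of size $\ge d\log n$ contained in $\Gamma_{B,\G}(a)$, and (ii) use $(c\log n,d\log n)$-diversity together with the distinctness of the $K_{a,j}$'s to show that any such gate is counted for at most $c\log n$ pairs $(a,j)$, then double-count. The only cosmetic difference is that you obtain the gates via the explicit maximal-boundary construction of Lemma~\ref{l-simple}, whereas the paper simply asserts the existence of $\lfloor r/(d\log n)\rfloor$ descendants of $g_{a,j}$ with distinct $S'$ of size $\ge d\log n$ and then weakens this to $r/(2d\log n)$; both routes yield the same final bound.
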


\begin{proof}
If two union gates $g,g'$ have outputs $(S,K,v)$ and $(S',K',v')$, resp., where $K \neq K'$, then $g$ and $g'$ cannot have
a descendant union gate in common. (This follows from consistency of union gates.)
Consider a union gate $g$ in $W$ that outputs $(S,K,v)$, where $|S| \ge d \log n$. Let $T=\{(a,j) \in L \times [\ell], g_{a,j}$ has $g$ among its descendants$\}$. Clearly, for all $(a,j)\in T$, $g_{a,j}$ outputs  $(\Gamma_{B,\G}(a),K,v_{a,j})$ for some $v_{a,j}\in\{0,1\}^{|K|}$.
Hence, $(a,j),(a,j')\in T$ implies $j=j'$. For each $(a,j)\in T$, $S \subseteq \Gamma_{B,\G}(a)$. By $(c \log n, d \log n)$-diversity of $\G$,
% and Lemma \ref{l-diversity} 
$|T| \le c \log n$.

For each $(a,j)\in L \times [\ell]$, $g_{a,j}$ has at least $\lfloor |\Gamma_{B,\G}(a)|/d \log n \rfloor \ge \lfloor r/d \log n\rfloor \ge r/2d \log n$ 
distinct descendant union gates $g'$ with output $(S',K_{a,j},v')$, where $|S'| \ge d \log n$ and $v'$ is arbitrary.
(Each such $g'$ has distinct $S'$.) Each such $g'$ can be descendant of at most $c \log n$ gates $g_{a,j}$ by the bound on $T$.
Hence, there are at least $|L| \cdot \ell r/ (2cd \log^2 n)$ distinct union gates in $W$.
\end{proof}

\subsection{Density lemma}

We state here an auxiliary {\em density} lemma. The proof is standard but we include it for completeness.

\begin{lemma}\label{l-density}
Let $n,r\geq 1$ be integers. Let $K_1,\dots,K_r$ be a collection of (not necessarily distinct) subintervals 
of $[n]$.
Let $u_1\in K_1, u_2\in K_2, \dots,u_r\in K_r$ be distinct elements. Denote $U=\{u_1,\dots,u_r\}$. There are at least $r/2$
sets $K_i$ such that $|K_i \cap U| \ge |K_i|r/4n$.
\end{lemma}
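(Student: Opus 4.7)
The plan is to proceed by contradiction. Suppose fewer than $r/2$ of the sets $K_i$ are ``dense'' in the sense of the conclusion, and let $S$ denote the remaining ``sparse'' indices, so $|S|>r/2$. I will extract a contradiction in three movements.

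First I would collect two elementary observations about sparse intervals. Since $u_i\in K_i\cap U$, every interval satisfies $|K_i\cap U|\ge 1$; combined with sparsity $|K_i\cap U|<|K_i|r/(4n)$, this forces $|K_i|>4n/r$ for all $i\in S$. So every sparse $K_i$ is ``long''. Second, I double-count incidences: the pair count
$$\sum_{i\in S}|K_i\cap U| \;=\; \sum_{j=1}^r\bigl|\{i\in S:u_j\in K_i\}\bigr|$$
is at least $|S|$ (the diagonal contribution $i=j$ for $j\in S$, using that the $u_j$ are distinct), while sparsity gives the upper bound $(r/4n)\sum_{i\in S}|K_i|$. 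Putting these together yields $\sum_{i\in S}|K_i|>4n|S|/r>2n$, so the sparse intervals have total length exceeding $2n$.

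Second, I would then use the interval structure of the $K_i$ to turn this length excess into a contradiction. The natural move is to charge sparse intervals to ``large voids'' in $U$: the non-$U$ points form $[n]\setminus U$ of size only $n-r$, and the gaps in $U$ of size $\ge 4n/r$ (viewed as maximal $U$-free subintervals) number at most roughly $r/4$ since their total length is bounded by $n$. A sparse $K_i$, being long and of low $U$-density, must contain a sub-interval of length comparable to $4n/r$ that meets $U$ in very few points — in particular it overlaps (or contains a large portion of) one such large void. Charging each $i\in S$ to such a void and bounding the number of sparse intervals that can be charged to the same void using the one-dimensional interval geometry (two intervals both containing the void must both extend through it, so their overlap pattern is essentially controlled by a pair of endpoints) contradicts $|S|>r/2$.

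The main obstacle will be this last charging step: the bound on how many sparse intervals can share a single large void has to be tight enough that $(\text{voids})\cdot(\text{multiplicity})<r/2$. The cleanest way I can see is to combine the length bound $\sum_{i\in S}|K_i|>2n$ with a direct pigeonhole on $[n]\setminus U$ using $\sum_{i\in S}|K_i\setminus U|>(1-r/(4n))\sum_{i\in S}|K_i|>2n-r/2$; that forces the non-$U$ parts of sparse intervals to cover $[n]\setminus U$ (of size $n-r$) with average multiplicity $>2$, and the interval structure then localizes this excess to a bounded number of ``hot'' regions where the sparsity hypothesis can be violated. The resulting contradiction proves at least $r/2$ of the $K_i$ are dense.
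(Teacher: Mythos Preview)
Your first movement is correct and useful: from $|S|>r/2$ you legitimately obtain $\sum_{i\in S}|K_i|>2n$. The gap is in everything after that. Neither the ``charge sparse intervals to large voids'' step nor the ``average multiplicity $>2$ on $[n]\setminus U$'' step can be completed in the way you sketch. For the void argument, a sparse $K_i$ only guarantees (via pigeonhole on its at most $|K_i|r/(4n)$ many $U$-points) a $U$-free subinterval of length roughly $2n/r$, not $4n/r$, so your count of large voids does not match; and even if it did, arbitrarily many of the $K_i$ can contain the same void, since intervals containing a fixed subinterval are parametrized by two independent endpoints and the distinct marked points $u_i$ impose no bound on how many of them extend across a given void. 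The alternative route via $\sum_{i\in S}|K_i\setminus U|$ gives only a pointwise average, and ``the interval structure then localizes this excess to a bounded number of hot regions where sparsity is violated'' is an assertion, not an argument: high covering multiplicity at some points of $[n]\setminus U$ does not by itself force any single $K_i$ to be dense.

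The paper's proof avoids all of this by changing the object of study. Instead of working with the particular family $\{K_i:i\in S\}$, it considers the union $\mathcal S$ of \emph{all} sparse subintervals of $[n]$, takes a minimal subcover of $\mathcal S$ by sparse intervals, and uses the standard fact that in a minimal interval cover every point lies in at most two of the covering intervals. Hence the covering intervals have total length at most $2n$, so by sparsity they (and therefore $\mathcal S$) contain fewer than $2n\cdot r/(4n)=r/2$ points of $U$. Thus more than $r/2$ of the $u_i$ lie in no sparse interval at all; for each such $u_i$ the interval $K_i\ni u_i$ must itself be non-sparse, which is exactly the conclusion. The missing idea in your attempt is this passage to a minimal cover with multiplicity at most two; your length bound $\sum_{i\in S}|K_i|>2n$ is the \emph{converse} of what the minimal-cover step delivers, and on its own does not close the argument.
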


\begin{proof}
Any subinterval $I$ of $[n]$ is called sparse if $|I\cap U|<|I|r/4n$. Let $I_1$, \dots, $I_k$ be the set of all sparse subintervals of $[n]$. We want to prove $|\cup_{i\in[k]}(I_i\cap U)|<r/2$. Denote $S= \cup_{i\in[k]}I_i$. Suppose $I'=\{{I'}_1$, \dots, ${I'}_{\ell}\}$ be the minimal set of sparse subintervals covering all sparse subintervals. Thus $\cup_{i\in[k]}I_i=\cup_{j\in[\ell]}{I'}_j$. We claim, any $u\in S$ is covered by at most two subintervals of $I'$. As otherwise assume there are more than two subintervals in $I'$ which contain $u$. All these intervals must have some nontrivial intersection including $u$. Among them consider the two, having the left most starting point and right most end point in $[n]$. It can be easily seen that the union of these two intervals covers all the other intervals and hence the minimality of $I'$ is violated. Therefore our claim follows. Now as $|S|\leq n$, from the previous claim we get $\sum_{j\in[\ell]}|{I'}_j|\leq 2n$. The construction also implies $\bigcup_{i\in[k]}(I_i\cap U)= \bigcup_{j\in[\ell]}(I'_j\cap U)$. By the sparsity of the intervals, there are at most $2n\times r/4n=r/2$ elements of $U$ contained in $\bigcup_{j\in[\ell]}(I'_j\cap U)$ and therefore in $\bigcup_{i\in[k]}(I_i\cap U)$. 
 Thus each of the $r/2$ elements of $U\setminus S$ are contained only in subintervals which are not sparse. Hence each set $K_i$ associated with these $r/2$ elements satisfies $|K_i \cap U|\ge |K_i|r/4n$.     
\end{proof}

\subsection{The main proof}

In this section we prove the lower bound $\approx n^{7/3}$ on the cost of witnesses for matrix product.

\begin{theorem}\label{t-union}
For all $n$ large enough there are matrices $\Pm \in \{0,1\}^{n\times n/3}$ and $\Q \in \{0,1\}^{n/3\times n}$
such that any correct witness for $\Pm \times \Q$ has cost at least $\Omega(n^{7/3} / 2^{O(\sqrt{ \log n })})$.
\end{theorem}

Let $n$ be large enough and let $\G_n$ be the graph from Section \ref{sec-rtgraphs}. 
Set $c=5, d=5, c_0=7 , c_1=20 $. Let $r = \delta_n n$, $s=n^{1/3} $, $ \ell=n^{1/3}$.
With probability at least 1/2, $\G$ is simultaneously $(c \log n,d \log n)$-diverse (Lemma~\ref{l-diversity}), 
$(c \log n,d \log n)$-unhelpful on each of the ${n \choose 2}$ subintervals of $C$ (Lemma \ref{l-unhelpful}),
admits only limited reuse (Lemma \ref{l-reuse}), 
and $\sum_{a\in A} |C[a]'_\G| \ge nr/3$ (by Chernoff inequality).

Let $W$ be a correct witness for $\G$, our goal is to lower bound its cost.

We will define a sequence of sets $T_6 \subseteq T_5 \subseteq \cdots \subseteq T_1 \subseteq A \times C$ of pairs of $(a,c)$ where $c$ is unique
for $a$.

\begin{enumerate}
\item {\bf (Unique pairs.)} $T_1 = \{ (a,c), a\in A, c\in C[a]'_\G\}$ is the set of pairs of $a$ and its unique vertices. By assumption, $|T_1| \ge nr/3$.

\item {\bf (Removing sparse $a$'s.)} Let $A_2 =  \{a\in A,\, |\Gamma_{B,\G}(a)| \ge r/6\}$. Clearly, $|A_2| \ge r/6$. 
Let $T_2 = T_1 \cap (A_2 \times C) = \{(a,c)\in T_1,\, |\Gamma_{B,\G}(a)| \ge r/6\}$. 
By an averaging argument, $|T_2| \ge nr/6$.

\item {\bf (Removing $a$'s with many subintervals $K$.)} 
For each $a\in A_2$, let $K_{a,1},\dots,K_{a,\ell_a}$ be obtained from Proposition~\ref{p-unionw}.
Let $A_3=\{a\in A_2, \ell_a \le \ell\}$ and $A'_2 = A_2 \setminus A_3$. If $|A'_2| \ge r/12$ we apply 
Lemma~\ref{l-manyparts} to conclude that the size of $W$ is at least 
$\frac{r}{12}\cdot \frac{r}{6} \cdot \frac{\ell}{2cd \log^2 n} \ge r^2\ell / 150 \log^2 n$. In this case we are done.

Otherwise consider the case $|A_3| \ge r/12$. Let $T_3 = T_2 \cap (A_3 \times C)$. Since $|A'_2| < r/12$, $|T_3| \ge nr/12$.

\item {\bf (Removing small subintervals $K$.)}  For each $a \in A_3$, let $K'_{a,1},\dots,K'_{a,\ell'_a}$ be 
the subsequence of $K_{a,1},\dots,K_{a,\ell_a}$ obtained by removing each $K_{a,j}$ of size smaller than $r / 24 \ell$.
So $\ell'_a \le \ell_a \le \ell$, and each $|K'_{a,j}| \ge r/24 \ell$.

We remove pairs $(a,c)$ from $T_1$ not covered by large $K_{a,j}$'s:
Let $T_4 = T_3 \cap (\bigcup_{a \in A_3} \{a\} \times (\bigcup_{j\in [\ell'_a]} K'_{a,j} ))$.
By the size and number of the removed subintervals $K$, $|T_4| \ge nr/24$.

\item {\bf (Removing overlapping subintervals $K$.)} For each $a\in A_3$ find a collection of disjoint subintervals $K''_{a,1},\dots,K''_{a,\ell''_a}$
such that $|T_4 \cap (\{a\} \times \bigcup_{j\in [\ell''_a]} K''_{a,j} )| \ge |T_4 \cap (\{a\} \times \bigcup_{j\in [\ell'_a]} K'_{a,j} )|/2$.
(Such a collection exists: Take the smallest subcollection of $K'_{a,1},\dots,K'_{a,\ell'_a}$ which covers their entire union.
Each point from $T_4 \cap (\{a\} \times \bigcup_{j\in [\ell'_a]} K'_{a,j} )$ is contained in at most two intervals of this subcollection.
Order the subcollection by the smallest element in each interval. 
Either the subset of intervals on odd positions in this ordering or on even positions has the required property.)

Let $T_5 = T_4 \cap (\bigcup_{a \in A_3} (\{a\} \times \bigcup_{j\in [\ell''_a]} K''_{a,j} ))$.
By the choice of removed subintervals $K$, $|T_5| \ge nr/48$.

\item {\bf (Disregarding sparse sub-rows of $\Q$.)} For $b\in B$, let $T_b = \{(a,c)\in T_5,\, \{b\} = \beta_a(\{c\})$, 
i.e. $b$ is on the path between $a$ and $c\}$. Let $K(a,c)$ denote $K''_{a,j}$ such that $c\in K''_{a,j}$. (This is uniquely defined
as $K''_{a,j}$'s are disjoint.)

Set $B_6=\{b\in B,\, |T_b| \ge r/48\}$. For $b\in B_6$, $(a,c)\in T_b$, we say that the triple $(b,a,c)$, is {\em dense} 
if $|\Q_b\rest_{K(a,c)}| \ge \frac{r}{24\ell} \cdot \frac{r}{48} \cdot \frac{1}{4n}$. By Lemma~\ref{l-density},
for at least half of the pairs $(a,c)\in T_b$, $(b,a,c)$ is dense.

Let $T_6 = \bigcup_{b \in B_6} \{(a,c)\in T_5,\, (b,a,c) \mbox{ is dense}\}$.

There are at most $\frac{r}{48} \cdot \frac{n}{3}$ pairs removed from $T_5$ because $b\not\in B_6$ and at most half of the remaining points
afterwards. So $|T_6| \ge |T_5|/3 \ge nr/150$.
\end{enumerate}

Given sets $T_6$ and $A_3, B_6$ obtained so far we proceed with the final calculation.

Consider a subinterval $K \subseteq C$. Let $A_K = \{a \in A_3,\, |T_6 \cap (\{a\}\times K)| \ge 2c_0 \log n\}$, 
and $R_K = \{b \in B,\, |\Q_b\rest_K| \ge \frac{r^2}{4800\cdot n\ell}\}$.

Let $m_K = \sum_{a \in A_K} |R_K \cap \beta'_{a,\G}(K)|$ and $m'_K = \sum_{a \in A_K} |T_6 \cap (\{a\} \times K)|$.
Since for any $a \in A_3$, $|T_6 \cap (\{a\} \times K)| \le |R_K \cap \beta'_{a,\G}(K)|$, $m'_k \le m_k$.
Also, $\sum_K m_K \ge \sum_K m'_K \ge |T_6| - 2c_0 n\ell \log n \ge |T_6|/2$.

Let $s_K$ be the number of union gates in $W$ that correspond to $K$ (i.e., that output $(S,K,v)$ for some $S$ and $v$.)

Consider subintervals $K \subseteq C$, where $s m_K \le s_K$, $\C = \{K \subseteq C,\, K \mbox{ subinterval}, s m_K \le s_K\}$.
If $\sum_{K \in \C} m_K \ge |T_6|/4$ then the $|W| \ge snr/600$ so we are done. 

So consider the case when $\sum_{K \in \C'} m_K \ge |T_6|/4$, where $\C' = \{K \subseteq C,\, K \mbox{ subinterval}, s m_K > s_K\}$.
For each $K\in \C'$, apply Lemma~\ref{l-unhelpfulcost} 
($R \leftarrow R_K, A' \leftarrow A_K, k \leftarrow c\log n, \ell \leftarrow c_0\log n, r \leftarrow r^2/4800 n\ell, D \rightarrow D_K$) 
to obtain the set $D_K$ of gates with at least $m_K/4 cc_0\log^2 n$ row-classes of cost at least $r^2/4800 n\ell$.
As all the gates in $D_K$ are $(a,K)$-chargeable for some $a \in A$, by definition of limited reuse, their row-class coincides with at most 
$c_1 s_K \log n \le m_K s c_1 \log n$ other gates in $\bigcup_{K'\in \C'} D_{K'}$. 
Thus, $\bigcup_{K'\in \C'} D_{K'}$ contains gates of at least $\sum_{K'\in \C'} m_{K'}/4c_1 c c_0 s\log^3 n$ row-classes each of 
cost at least $r^2/4800 n\ell$.
This contributes to the cost of $W$ by at least $\frac{r^2}{4800 n \ell} \cdot  \frac{nr}{2400 c_1 c c_0 s\log^3 n} = \Theta(r^3/\ell s \log^3 n)$. 
The theorem follows.

%%
%% Bibliography
%%

%% Either use bibtex (recommended), 

%% .. or use the thebibliography environment explicitely


\begin{thebibliography}{10}

\bibitem{AMS12}
Noga Alon, Ankur Moitra, and Benny Sudakov.
\newblock Nearly complete graphs decomposable into large induced matchings and
  their applications.
\newblock In {\em Proceedings of the 44th Symposium on Theory of Computing
  Conference, {STOC} 2012, New York, NY, USA, May 19 - 22, 2012}, pages
  1079--1090, 2012.
\newblock Available from: \url{http://doi.acm.org/10.1145/2213977.2214074},
  \href {http://dx.doi.org/10.1145/2213977.2214074}
  {\path{doi:10.1145/2213977.2214074}}.

\bibitem{Angluin76}
Dana Angluin.
\newblock The four russians' algorithm for boolean matrix multiplication is
  optimal in its class.
\newblock In {\em ACM SIGACT News}, pages 19--33, 1976.

\bibitem{BansalW12}
Nikhil Bansal and Ryan Williams.
\newblock Regularity lemmas and combinatorial algorithms.
\newblock {\em Theory of Computing}, 8(1):69--94, 2012.
\newblock Available from: \url{https://doi.org/10.4086/toc.2012.v008a004},
  \href {http://dx.doi.org/10.4086/toc.2012.v008a004}
  {\path{doi:10.4086/toc.2012.v008a004}}.

\bibitem{Chan15}
Timothy~M. Chan.
\newblock Speeding up the four russians algorithm by about one more logarithmic
  factor.
\newblock In {\em Proceedings of the Twenty-Sixth Annual {ACM-SIAM} Symposium
  on Discrete Algorithms, {SODA} 2015, San Diego, CA, USA, January 4-6, 2015},
  pages 212--217, 2015.
\newblock Available from: \url{https://doi.org/10.1137/1.9781611973730.16},
  \href {http://dx.doi.org/10.1137/1.9781611973730.16}
  {\path{doi:10.1137/1.9781611973730.16}}.

\bibitem{CoppersmithW90}
Don Coppersmith and Shmuel Winograd.
\newblock Matrix multiplication via arithmetic progressions.
\newblock {\em J. Symb. Comput.}, 9(3):251--280, 1990.
\newblock Available from: \url{https://doi.org/10.1016/S0747-7171(08)80013-2},
  \href {http://dx.doi.org/10.1016/S0747-7171(08)80013-2}
  {\path{doi:10.1016/S0747-7171(08)80013-2}}.

\bibitem{FischerM71}
Michael~J Fischer and Albert~R Meyer.
\newblock Boolean matrix multiplication and transitive closure.
\newblock pages 129--131, 1971.

\bibitem{Furman70}
M.~E. Furman.
\newblock Application of a method of fast multiplication of matrices in the
  problem of finding the transitive closure of a graph.
\newblock page 11(5):1252, 1970.

\bibitem{Gall14a}
Fran{\c{c}}ois~Le Gall.
\newblock Powers of tensors and fast matrix multiplication.
\newblock In {\em International Symposium on Symbolic and Algebraic
  Computation, {ISSAC} '14, Kobe, Japan, July 23-25, 2014}, pages 296--303,
  2014.
\newblock Available from: \url{http://doi.acm.org/10.1145/2608628.2608664},
  \href {http://dx.doi.org/10.1145/2608628.2608664}
  {\path{doi:10.1145/2608628.2608664}}.

\bibitem{Itai77}
Alon Itai.
\newblock Finding a minimum circuit in a graph.
\newblock In {\em Proceedings of the 9th Annual {ACM} Symposium on Theory of
  Computing, May 4-6, 1977, Boulder, Colorado, {USA}}, pages 1--10, 1977.
\newblock Available from: \url{http://doi.acm.org/10.1145/800105.803390}, \href
  {http://dx.doi.org/10.1145/800105.803390} {\path{doi:10.1145/800105.803390}}.

\bibitem{Munro71}
Ian Munro.
\newblock Efficient determination of the transitive closure of a directed
  graph.
\newblock pages 1(2):56–--58, 1971.

\bibitem{RS78}
I.~Rusz{\'a} and E.~Szemer{\'e}di.
\newblock Triple systems with no six points carrying three triangles.
\newblock In {\em Colloquia Mathematica Societatis J{\'a}nos Bolyai}, pages
  939--945, 1978.

\bibitem{Strassen69}
V.~Strassen.
\newblock Gaussian elimination is not optimal.
\newblock In {\em Numer. Math}, pages 13:354–--356, 1969.

\bibitem{FourRussian70}
M.~A.~Kronrod V.~Z.~Arlazarov, E. A.~Dinic.
\newblock On economical construction of the transitive closure of a directed
  graph.
\newblock pages 11(5):1209–--1210, 1970.

\bibitem{Valiant75}
Leslie~G. Valiant.
\newblock General context-free recognition in less than cubic time.
\newblock pages 10(2):308–--315, 1975.

\bibitem{Williams12}
Virginia~Vassilevska Williams.
\newblock Multiplying matrices faster than coppersmith-winograd.
\newblock In {\em Proceedings of the 44th Symposium on Theory of Computing
  Conference, {STOC} 2012, New York, NY, USA, May 19 - 22, 2012}, pages
  887--898, 2012.
\newblock Available from: \url{http://doi.acm.org/10.1145/2213977.2214056},
  \href {http://dx.doi.org/10.1145/2213977.2214056}
  {\path{doi:10.1145/2213977.2214056}}.

\bibitem{WilliamsW10}
Virginia~Vassilevska Williams and Ryan Williams.
\newblock Subcubic equivalences between path, matrix and triangle problems.
\newblock In {\em 51th Annual {IEEE} Symposium on Foundations of Computer
  Science, {FOCS} 2010, October 23-26, 2010, Las Vegas, Nevada, {USA}}, pages
  645--654, 2010.
\newblock Available from: \url{https://doi.org/10.1109/FOCS.2010.67}, \href
  {http://dx.doi.org/10.1109/FOCS.2010.67} {\path{doi:10.1109/FOCS.2010.67}}.

\bibitem{Yu15}
Huacheng Yu.
\newblock An improved combinatorial algorithm for boolean matrix
  multiplication.
\newblock In {\em Automata, Languages, and Programming - 42nd International
  Colloquium, {ICALP} 2015, Kyoto, Japan, July 6-10, 2015, Proceedings, Part
  {I}}, pages 1094--1105, 2015.
\newblock Available from: \url{https://doi.org/10.1007/978-3-662-47672-7_89},
  \href {http://dx.doi.org/10.1007/978-3-662-47672-7_89}
  {\path{doi:10.1007/978-3-662-47672-7_89}}.

\end{thebibliography}
\end{document}